\documentclass[letterpaper, 10 pt, conference]{ieeeconf} 
\IEEEoverridecommandlockouts
\usepackage{graphics}
\usepackage{amsmath}
\usepackage{amssymb}
\usepackage{hyperref}
\usepackage[dvipsnames]{xcolor}
\usepackage{subcaption}
\usepackage{algpseudocode}
\usepackage{algorithm}
\usepackage{centernot}

\newtheorem{definition}{Definition}  
\newtheorem{theorem}{Theorem}        
\newtheorem{proposition}{Proposition}
\newtheorem{lemma}{Lemma}            
\newtheorem{corollary}{Corollary}    

\usepackage{tikz}
\usetikzlibrary{automata, arrows, arrows.meta, positioning, decorations.pathreplacing, calc} 

\tikzset{
    node split radius/.initial=1,
    node split color 1/.initial=red,
    node split color 2/.initial=green,
    node split color 3/.initial=blue,
    node split half/.style={node split={#1,#1+180}},
    node split/.style args={#1,#2}{
        path picture={
            \tikzset{
                x=($(path picture bounding box.east)-(path picture bounding box.center)$),
                y=($(path picture bounding box.north)-(path picture bounding box.center)$),
                radius=\pgfkeysvalueof{/tikz/node split radius}
            }
            \foreach \ang[count=\iAng, remember=\ang as \prevAng (initially #1)] in {#2,360+#1}
                \fill[line join=round, draw, fill=\pgfkeysvalueof{/tikz/node split color \iAng}]
                (path picture bounding box.center)
                --++(\prevAng:\pgfkeysvalueof{/tikz/node split radius})
                arc[start angle=\prevAng, end angle=\ang] --cycle;
        }
    }
}

\tikzset{
    every state/.style={draw, shape=circle, minimum size=18pt, inner sep=1pt}, 
    initial text={},
    c/.style={->},
    u/.style={->, densely dashed},
    state0/.style={state, fill=Gray},
    state1/.style={state, fill=SkyBlue},
    state2/.style={state, fill=Apricot},
    state12/.style={state, node split color 1=SkyBlue, node split color 2=Apricot, node split half=90},
    state3/.style={state, fill=Plum},
}

\newcommand{\buildtrans}[4][]{%
\begin{tikzpicture}[anchor=base,inner sep=0pt,scale=#2,transform shape]
\path (0,0) edge[#4,#1]%
node[above,outer sep=2pt,inner sep=0pt,pos=.45]{\scriptsize$#3$}%
node[below,outer sep=2pt,pos=.45]{\scriptsize~} (18pt,0);%
\end{tikzpicture}}

\def\trans#1{\ensuremath{\mathbin{\kern1pt\buildtrans{0.8}{#1}{->}}}}
\def\transs#1{\ensuremath{\mathbin{\kern1pt\buildtrans{0.8}{#1}{->>}}}}
\def\dtrans#1{\ensuremath{\mathbin{\kern1pt\buildtrans[dashed]{0.8}{#1}{->}}}}
\def\dtranss#1{\ensuremath{\mathbin{\kern1pt\buildtrans[dashed]{0.8}{#1}{->>}}}}

\usepackage{xspace}

\newcommand{\mode}{\text{$M$}\xspace}
\newcommand{\mmode}{\text{$\mathcal{M}$}\xspace}
\newcommand{\brs}{\text{$\mathrm{BRS}$}\xspace}
\newcommand{\funcstyle}[4]{\text{$\mathbf{#1}^{#2}_{#3}(#4)$}\xspace}
\newcommand{\re}  [1]{\funcstyle{R}{}{}{#1}}
\newcommand{\bad} [1]{\funcstyle{B}{}{}{#1}}
\newcommand{\snb} [1]{\funcstyle{N}{}{}{#1}}
\newcommand{\mnb} [2]{\funcstyle{MN}{}{#1}{#2}}

\newcommand{\mmnb}[1]{\funcstyle{MMN}{}{\mathcal{M}}{#1}}
\newcommand{\gnb} [2]{\funcstyle{GN}{}{#1}{#2}}
\newcommand{\pnb} [1]{\funcstyle{PN}{}{\Sigma_{p}}{#1}}
\newcommand{\mtnb}[1]{\funcstyle{MTN}{}{\mathcal{C}}{#1}}
\newcommand{\qnb} [1]{\funcstyle{QN}{}{k}{#1}}

\newcommand{\automaton}{$( Q, \, \allowbreak \Sigma, \, \allowbreak \trans{}, \, \allowbreak Q_{m}, \, \allowbreak q_{0} )$\xspace}
\newcommand{\supervisor}{$( Q_{S}, \, \allowbreak \Sigma_{S}, \, \allowbreak \trans{}_{S}, \, \allowbreak Q_{m,S}, \, \allowbreak q_{0,S} )$\xspace}
\newcommand{\supervisorconstruct}{$( Q_{S}, \, \allowbreak \Sigma, \, \allowbreak \trans{} \cap \allowbreak (Q_{S} \times \allowbreak \Sigma \times \allowbreak Q_{S}), \, \allowbreak Q_{m} \cap Q_{S}, \, \allowbreak q_{0} \cap Q_{S} )$\xspace}

\usepackage{todonotes}
\title{\LARGE \bf
Multimodal Nonblocking Supervisory Control Synthesis
}

\author{M. Minkenberg$^{1}$, M. A. Reniers$^{1}$, M. A. Goorden$^{1}$, J. M. van de Mortel-Fronczak$^{1}$, W. J. Fokkink$^{2}$%
\thanks{*This work was supported as part of STORM\_SAFE, an Interreg North Sea project co-funded by the European Union.}%
\thanks{$^{1}$Department of Mechanical Engineering, Eindhoven University of Technology, Eindhoven, the Netherlands. 
\texttt{\{\href{mailto:m.minkenberg@tue.nl}{m.minkenberg}, \allowbreak
\href{mailto:m.a.reniers@tue.nl}{m.a.reniers}, \allowbreak
\href{mailto:m.a.goorden@tue.nl}{m.a.goorden}, \allowbreak
\href{mailto:j.m.v.d.mortel@tue.nl}{j.m.v.d.mortel}\}@tue.nl} }%
\thanks{$^{2}$Department of Computer Science, Vrije Universiteit Amsterdam, Amsterdam, the Netherlands.
\texttt{\href{mailto:w.j.fokkink@vu.nl}{w.j.fokkink@vu.nl}}%
}
}

\usepackage{eso-pic}
\AddToShipoutPictureBG*{%
\AtPageUpperLeft{%
\setlength\unitlength{1in}%
\hspace*{\dimexpr0.5\paperwidth\relax}
\makebox(0,-1.25)[c]{
\begin{tabular}{c c}
Marijn Minkenberg \emph{et al.},
Multimodal Nonblocking Supervisory Control Synthesis \\
Submitted to
{\em CDC 2026},
uploaded to arXiv on \today. \\
\end{tabular}}}}

\begin{document}

\maketitle
\thispagestyle{empty}
\pagestyle{empty}

\pagestyle{plain}


\begin{abstract}
Supervisory control synthesis leverages the nonblocking property to show liveness of the supervised system. This property is particularly weak when system models include fault behavior, reconfiguration, or multiple control goals.
To capture a more suitable nonblocking property for such system models, this paper introduces modal and multimodal nonblocking. These novel nonblocking variants impose a restriction on the states visited on the path towards a marked state. 
Synthesis algorithms are presented to construct modal and multimodal nonblocking supervisors. The novel nonblocking variants are illustrated with three intuitive examples, inspired by real synthesis problems encountered while applying supervisory control synthesis to safety-critical water infrastructures. A comparison is made between the novel nonblocking variants and established nonblocking variants to show that they are distinct. Additionally, where possible, conditions are formulated under which one variant implies the other.
\end{abstract}


\section{Introduction} \label{sec:introduction}

Supervisory control synthesis is a method for the calculation of a correct-by-construction supervisor based on a plant model~\cite{Ramadge1987SupervisoryProcesses}.
It relies on \emph{nonblocking} to show liveness of the supervised system. A supervised system is called nonblocking if it is always able to reach a marked state, often indicating the `rest state' or `safe state' of a system~\cite{Cassandras2008}.
However, nonblocking relies on the assumption that the specific path that is used to reach a marked state does not matter. 
For plants that deal with fault behavior, are reconfigurable, or address multiple control goals, this assumption does not hold.

Over the years, several nonblocking variants have been introduced to obtain more suitable nonblocking properties for certain applications. For example, generalized nonblocking~\cite{Malik2008GeneralisedNonblocking} defines a subset of states from which a marked state must be reachable. This is relevant for hierarchical supervisory control.
Nonblocking with progressive events~\cite{Ware2014ProgressiveVerification} (henceforth progressive nonblocking) restricts the set of events that may be used to reach a marked state. This is particularly useful when the plant contains rare or undesirable behavior.
Multitasking nonblocking~\cite{DeQueiroz2005MultitaskingSystems} introduces colored marked states, 
capturing that multiple rest states must always remain reachable. This is a suitable nonblocking variant for multitasking systems.
Quantitative nonblocking~\cite{Zhang2024QuantitativelySystem} places a restriction on the number of events that may be taken to reach a marked state. This is suitable when such a metric must be limited, for example for workpiece-processing plants.

In ongoing research~\cite{Minkenberg2026AbstractionInsights}, supervisory control synthesis is being applied to safety-critical water infrastructures.
These systems address various---and often conflicting---control goals, like flood protection, water discharge, fish migration, and ship traffic. Each goal typically has its own set of marked states. The active goals are reconfigured based on environmental parameters such as water levels, the time of year, and the presence of ships.

For these systems, it makes sense to synthesize a supervisor that is always able to reach a marked state without changing the active set of control goals. After all, the system has little to no influence over when or whether the control goals change.
In addition, for critical infrastructures, it is essential to include error states and have the supervisor deal with fault behavior. Naturally, the supervisor should not rely on error states to reach a marked state.

Synthesizing such a supervisor requires a nonblocking variant that restricts the states visited on the path to a marked state. This precise control problem is not addressed by established nonblocking variants. Multitasking nonblocking captures the reachability of multiple marked states at once, but it does not restrict how any of those marked states are reached. Progressive nonblocking does place restrictions on the path to a marked state. However, the restriction is expressed in events and can only be applied for a single configuration or control goal at a time.

In general, it is desirable to be able to express nonblocking without relying on fault behavior, reconfiguration, or different control goals. It is an easy enough modeling task to define a set of states belonging to a certain configuration, or a set of states without a system error. However, there is no method to synthesize a supervisor that guarantees nonblocking without ever leaving such a state set.

This paper introduces \emph{modal nonblocking} and \emph{multimodal nonblocking} as novel nonblocking variants. They enable the modeler to specify one or more state sets and show that a marked state can be reached without leaving the set(s). These variants provide a more expressive and versatile nonblocking property that can be used for the aforementioned control problem with safety-critical water infrastructures.

Additionally, algorithms are presented for synthesizing modal and multimodal nonblocking supervisors. The computational complexity of the algorithms scales linearly with the number of defined state sets. The synthesized supervisors are shown to be controllable and maximally permissive.

Finally, the novel nonblocking variants are compared to previously established ones to show that they are distinct. Where possible, specific conditions are given under which one nonblocking variant implies the other.


\section{Preliminaries} \label{sec:preliminaries}

\subsection{Finite automata}
This paper uses \emph{finite automata}, as introduced in \cite{Arnold1994FiniteSystems}.

\begin{definition}[Finite Automaton]
A \emph{finite automaton} (FA) is a 5-tuple  $P = $ \automaton, where:
\begin{itemize}
    \item $Q$ is a finite set of states;
    \item $\Sigma$ is a finite set of events, partitioned into controllable events $\Sigma_c$ and uncontrollable events $\Sigma_{u}$;
    \item $\trans{} \subseteq Q \times \Sigma \times Q$ is the set of transitions;
    \item $Q_{m} \subseteq Q$ is the set of marked states;
    \item $q_{0} \in Q$ is the initial state.
\end{itemize}
\end{definition}

An FA $P_{1} = ( Q_{1}, \, \allowbreak \Sigma, \, \allowbreak \trans{}_{1}, \, \allowbreak Q_{m,1}, \, \allowbreak q_{0} )$ is a \emph{subautomaton} of FA $P_{2} = ( Q_{2}, \, \allowbreak \Sigma, \, \allowbreak \trans{}_{2}, \, \allowbreak Q_{m,2}, \, \allowbreak q_{0} )$, denoted $P_{1} \sqsubseteq P_{2}$, iff $Q_{1} \subseteq Q_{2}$, $\trans{}_{1} \subseteq \trans{}_{2}$, and $Q_{m,1} = Q_{m,2} \cap Q_{1}$.
In other words, $P_{1} \sqsubseteq P_{2}$ if $P_{1}$ can be obtained from $P_{2}$ by removing transitions and/or states, where all transitions that originate from or go to removed states are also removed \cite{Ouedraogo2010SymbolicAutomata}.

A transition $(q, \, \sigma, \, q') \in \trans{}$, also written as $q \trans{\sigma} q'$, defines that $P$ is able to go from state $q$ to state $q'$ by executing event $\sigma$. The automata in this paper are \emph{deterministic}, meaning that $q \trans{\sigma} q'$ and $q \trans{\sigma} q''$ always implies $q' =  q''$. 
The following notations are used:
\begin{itemize}
    \item $q \trans{} q'$ denotes that $q \trans{\sigma} q'$ for some $\sigma \in \Sigma$;
    \item $q \dtrans{} q'$ denotes that $q \trans{\sigma_{u}} q'$ for some $\sigma_{u} \in \Sigma_{u}$;
    \item $q \trans{}^{-1} q'$ denotes that $q' \trans{} q$;
    \item $q \transs{} q'$ denotes that $q \trans{} q''$ and $q'' \transs{} q' $ for some $q''$, or $q = q'$, and this applies to $\dtranss{}$ and $\transs{}_{x}$ in the same way.
\end{itemize}

\begin{definition}[Reachable]
A state $q \in Q$ is considered \emph{reachable}, denoted \re{q}, iff $q_{0} \transs{} q$. An FA $P = $ \automaton is \emph{reachable} iff $\forall q \in Q (\re{q})$.
\end{definition}


\subsection{Supervisory control synthesis}

Supervisory control synthesis~\cite{Ramadge1987SupervisoryProcesses} constructs a supervisor FA $S$ based on a plant FA $P$.
In this paper, it is assumed that $S \sqsubseteq P$ and $S$ is reachable.
The supervisor interacts with the plant by disabling some controllable events.
The synthesis process must construct a supervisor that is \emph{nonblocking}, \emph{controllable}, and \emph{maximally permissive}.

The nonblocking property of a supervisor depends on the nonblocking properties of its states.

\begin{definition}[State properties]
\label{def:state-properties}
Let $P = $ \automaton be an FA. A state $q \in Q$ is:
\begin{itemize}
    \item \emph{nonblocking}, denoted \snb{q}, iff $\exists q_{m} \in Q_{m} \, (q \transs{} q_{m})$;
    \item \emph{blocking} iff $\neg \snb{q}$;
    \item \emph{bad}, denoted \bad{q}, iff $\exists q' \in Q \, (\neg \snb{q'} \land q \dtranss{} q')$.
\end{itemize}
\end{definition}
 
A supervisor is \emph{nonblocking} if all reachable states in the supervisor are nonblocking states.

\begin{definition}[Standard nonblocking~\cite{Cassandras2008}]
\label{def:nonblocking-FA}
Let $P = $ \automaton be an FA. $P$ is \emph{standard nonblocking} or simply \emph{nonblocking}, denoted \snb{P}, iff $\forall q \in Q \, (\re{q} \Rightarrow \snb{q})$.
\end{definition}

A supervisor is \emph{controllable} for a plant if it never disables uncontrollable events in $\Sigma_{u}$ that are possible in the plant.

\begin{definition}[Controllable~\cite{Thuijsman2022}]
\label{def:controllable}
Let $P = $ \automaton be a plant FA, let $S = $ \supervisor be a supervisor FA, and let $\Sigma_{u} \subseteq \Sigma$ be the set of uncontrollable events. $S$ is \emph{controllable} for $P$ iff $\trans{} \cap (Q_{S} \times \Sigma_{u} \times Q) \subseteq \trans{}_{S}$.
\end{definition}

A supervisor that is \emph{nonblocking} and \emph{controllable} is called \emph{proper}. Finally, a supervisor is \emph{maximally permissive} if it does not disable more behavior than necessary to be proper.

\begin{definition}[Maximally permissive]
\label{def:maximally-permissive}
Let $P$ be a plant FA and $S$ be a proper supervisor FA. $S$ is \emph{maximally permissive} iff for each proper supervisor FA $S'$ it holds that $S' \sqsubseteq S$.
\end{definition}


\subsection{Nonblocking property}
To obtain the set of nonblocking states, a \textsc{Backward Reachability Search} is defined, see Algorithm~\ref{alg:reach}. Its format and use in synthesis are inspired by~\cite{Thuijsman2022}.

\begin{algorithm}
\caption{\textsc{Backward Reachability Search} (\brs).} \label{alg:reach}
\begin{algorithmic}[1]

\item[] \textbf{Input:} Considered state set $Q_{c}$,
starting state set $Q_{s}$, transition relation $\trans{}$
\item[] \textbf{Output:} Maximal state set $Q_{r}\subseteq Q_{c}$ from which a state $q \in Q_{c} \cap Q_{s}$ is reachable with transitions in $\trans{}$
\State $i \gets 0, \, Q^{i} \gets Q_{c} \cap Q_{s}$
\Repeat
\State $Q^{i+1} \gets Q^{i} \cup \{ q \in Q_{c} | q \trans{} q', q' \in Q^{i}\}$
\State $i \gets i + 1$
\Until $Q^{i} = Q^{i-1}$
\State $Q_{r} \gets Q^{i}$

\end{algorithmic}
\end{algorithm}

This algorithm takes all states in the set $Q_{c} \cap Q_{s}$, and iteratively finds more states in $Q_{c}$ that have a sequence of transitions in $\trans{}$ to a state in this set.
The loop terminates when no new states are found. At that point, $Q^{i}$ contains all states that can reach $Q_{c} \cap Q_{s}$ with transitions in $\trans{}$.

\brs can be used to find the set of nonblocking states, the set of bad states, and the set of reachable states. 
This is used for synthesis, as presented in Algorithms~\ref{alg:synthesis-modal-nonblocking} and~\ref{alg:synthesis-multimodal-nonblocking}.
\begin{lemma}\cite{Thuijsman2022}.
\label{lem:use-breach}
Let $P = $ \automaton be an FA, and let $Q_{\mathrm{blocking}} = \{q \in Q | \neg \snb{q}\}$ be the set of blocking states of $P$. Then, the following holds:
\begin{itemize}
    \item $\forall q \in Q \, (\snb{q} \Leftrightarrow q \in \brs(Q, \, Q_{m}, \, \trans{}))$;
    \item $\forall q \in Q \, (\bad{q} \Leftrightarrow q \in \brs(Q, \, Q_{\mathrm{blocking}}, \, \dtrans{}))$;
    \item $\forall q \in Q \, (\re{q} \Leftrightarrow q \in \brs(Q, \, \{q_{0}\}, \, \trans{}^{-1}))$. 
\end{itemize}
\end{lemma}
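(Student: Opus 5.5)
The plan is to first prove a single general characterization of \brs, and then obtain the three bullets of Lemma~\ref{lem:use-breach} as instances. The general claim is: for any $Q_{c}$, $Q_{s}$ and transition relation $\rightarrow$, the output $Q_{r}$ of Algorithm~\ref{alg:reach} is exactly the set of states $q \in Q_{c}$ from which some $q' \in Q_{c} \cap Q_{s}$ can be reached by a finite path of $\rightarrow$-transitions whose intermediate states all lie in $Q_{c}$.

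Termination comes first. The sets $Q^{i}$ are increasing subsets of the finite set $Q_{c}$, so the loop stops after at most $|Q_{c}|+1$ iterations.

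Next, prove by induction on $i$ that $Q^{i}$ is exactly the set of states in $Q_{c}$ with such a path of length at most $i$. The base case $Q^{0} = Q_{c} \cap Q_{s}$ corresponds to paths of length $0$. For the step, $q \in Q^{i+1} \setminus Q^{i}$ means $q \in Q_{c}$ has a transition to some $q' \in Q^{i}$; prepending this transition to the path from $q'$ gives a path of length at most $i+1$. Conversely, a path of length at most $i+1$ either has length at most $i$, or its first step leads to a state with a path of length at most $i$. At the fixed point $Q^{i} = Q^{i-1}$ we get $Q^{j} = Q^{i}$ for all $j \ge i$. Hence $Q_{r}$ is the union over all path lengths, which proves the claim.

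Now instantiate the claim. Since $Q_{c} = Q$ in all three bullets, the constraint that intermediate states lie in $Q_{c}$ is vacuous.
\begin{itemize}
\item For $(Q, Q_{m}, \trans{})$, membership in $Q_{r}$ is $\exists q_{m} \in Q_{m} \, (q \transs{} q_{m})$, which is \snb{q} by Definition~\ref{def:state-properties}.
\item For $(Q, Q_{\mathrm{blocking}}, \dtrans{})$, membership is $\exists q' \, (\neg\snb{q'} \land q \dtranss{} q')$, which is exactly \bad{q}.
\item For $(Q, \{q_{0}\}, \trans{}^{-1})$, membership means $q \,(\trans{}^{-1})^{*}\, q_{0}$. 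Reversing the path, this is $q_{0} \transs{} q$, i.e.\ \re{q}.
\end{itemize}

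The main obstacle is the reversal step in the third bullet. It is an easy induction on path length: a chain $q = p_{0}, p_{1}, \dots, p_{n} = q_{0}$ with $p_{k} \trans{}^{-1} p_{k+1}$ means $p_{k+1} \trans{} p_{k}$, so read backwards it is a $\trans{}$-path from $q_{0}$ to $q$. The rest is routine fixpoint bookkeeping, and the paper may simply cite~\cite{Thuijsman2022} for it.
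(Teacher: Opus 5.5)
Your proof is correct. The paper does not prove this lemma itself; it only cites~\cite{Thuijsman2022}. Its proof of the modal analogue, Lemma~\ref{lem:mnb-breach}, uses exactly your invariant: the $i$-th iterate $Q^{i}$ of Algorithm~\ref{alg:reach} is the set of states in $Q_{c}$ that reach $Q_{c} \cap Q_{s}$ within $i$ transitions while staying in $Q_{c}$. Your general characterization, stated for arbitrary $Q_{c}$, also gives that lemma by taking $Q_{c} = M$, so both results follow from your single claim.
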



\section{Motivating examples} \label{sec:examples}

This section features three examples to illustrate different weaknesses of nonblocking. Each example
is also encountered when applying supervisory control synthesis to safety-critical water infrastructures. For each example, it is shown how a more suitable nonblocking property can be obtained.

\subsection{Example 1: System with error states}
Consider an FA that includes error states, such as the one shown in Fig.~\ref{fig:example-faults}. Here, states $C$ and $D$ represent error states. The FA is nonblocking, but this relies on states $C$ and $D$ to be visited. It is argued that a stronger nonblocking property is necessary that determines nonblocking without relying on these error states.

\begin{figure}[!ht]
    \centering
    \begin{tikzpicture}[node distance=5.0em, font=\small]
    
        \node [state1, initial left] (A) {$A$};
        \node [state1, right of=A] (B) {$B$};
        \node [state, above of=B] (C) {$C$};
        \node [state, right of=C] (D) {$D$};
        \node [state1, below of=D, accepting] (E) {$E$};
        
        \draw (A) edge [c, bend left] node [above] {$\sigma_{1}$} (B);
        \draw (B) edge [c, bend left] node [below] {$\sigma_{2}$} (A);
        \draw (B) edge [u] node [left] {$\sigma_{\mathrm{error}}$} (C);
        \draw (C) edge [c, bend left] node [above] {$\sigma_{1}$} (D);
        \draw (D) edge [c, bend left] node [below] {$\sigma_{2}$} (C);
        \draw (D) edge [u] node [right] {$\sigma_{\mathrm{fix}}$} (E);
        
    \end{tikzpicture}
    \caption{Example FA for a system with error states.}
    \label{fig:example-faults}
\end{figure}

Suppose that a subset of states is chosen to describe an error-free mode. Here, this is the set $\{ A,\, B,\, E \}$, as highlighted in blue. It is then clear that, from states $A$ and $B$, there exists no path towards a marked state without leaving this set. In other words, the system becomes blocking if it is restricted to the error-free mode.

By defining the error-free mode, it becomes possible to express that this FA is only nonblocking when it is allowed to leave the error-free mode. The modeler might want to reconsider whether this behavior is desirable.


\subsection{Example 2: System with reconfiguration}

The study in \cite{Thuijsman2024SupervisoryLines} introduces a framework for modeling product lines with reconfigurable features. Their running example is a coffee machine where various features may \emph{come} or \emph{go} at any moment, for example as a result of maintenance. Although the work shows that a nonblocking supervisor can be synthesized for the overall system, it has not been considered whether nonblocking holds for each individual configuration. Because the system may operate for a long time in one configuration, it is desirable that each configuration is nonblocking when considered in isolation.

A small FA with reconfiguration is shown in Fig.~\ref{fig:example-dynamic-feature-configuration}. Here, reconfiguration events $\sigma_{\mathrm{X\_come}}$ and $\sigma_{\mathrm{X\_go}}$ represent some feature $\mathrm{X}$ coming or going from the system.
In this example, states $B_{1}$ and $B_{2}$ represent a mode where feature $\mathrm{X}$ is present in the system. Because neither of these two states is marked, the system is never in a marked state while $\mathrm{X}$ is present, which may be indefinitely.
However, the FA is nonblocking because marked state $P_{2}$ is always reachable. The nonblocking property relies on ever returning to states where feature $\mathrm{X}$ is not present.

\begin{figure}[!b]
    \centering
    \begin{tikzpicture}[node distance=5.0em, font=\small]

        \node [state1, initial below] (A1) {$P_{1}$};
        \node [state1, accepting, left of=A1] (A2) {$P_{2}$};
        \node [state2, above right of=A1] (B1) {$B_{1}$};
        \node [state2, right of=B1] (B2) {$B_{2}$};
        
        \draw (A1) edge [c, bend left] node [below] {$\sigma_{1}$} (A2);
        \draw (A2) edge [c, bend left] node [above] {$\sigma_{2}$} (A1);
        \draw (A1) edge [u, bend left] node [left] {$\sigma_{\mathrm{X\_come}}$} (B1);
        \draw (B1) edge [u, bend left] node [right] {$\sigma_{\mathrm{X\_go}}$} (A1);
        \draw (B1) edge [c, bend left] node [above] {$\sigma_{3}$} (B2);
        \draw (B2) edge [c, bend left] node [below] {$\sigma_{4}$} (B1);
        \draw (A2) edge [u, bend left] node [above left] {$\sigma_{\mathrm{X\_come}}$} (B1);
        \draw (B2) edge [u, bend left] node [below right] {$\sigma_{\mathrm{X\_go}}$} (A1);
        
    \end{tikzpicture}
    \caption{Example FA for a system with reconfiguration.}
    \label{fig:example-dynamic-feature-configuration}
\end{figure}

One way to reveal this issue is to consider nonblocking separately for each configuration. In the example, the configurations are given by state sets $\{P_{1}, \, P_{2}\}$ and $\{B_{1}, \, B_{2}\}$, highlighted in blue and orange, respectively. Both states in set $\{P_{1}, \, P_{2}\}$ can reach a marked state while staying in the set. However, neither state in set $\{B_{1}, \, B_{2}\}$ can reach a marked state while staying in the set. As a result, $B_{1}$ and $B_{2}$ are blocking states within this set---a clear indication that there is an issue in this FA, whenever $\mathrm{X}$ is present.


\subsection{Example 3: System with multiple control goals}

Consider an FA that describes two controlled traffic lights at a crossing, shown in Fig.~\ref{fig:example-trafficlights}. The state names indicate for each traffic light whether they are red ($R$) or green ($G$). The FA leaves out the $GG$ state, where both traffic lights are green simultaneously.
Clearly, this FA is nonblocking, which guarantees that at least one traffic light can turn green. However, nonblocking alone does not guarantee that each traffic light can turn green when their behavior is isolated.

\begin{figure}[!ht]
    \centering
    \begin{tikzpicture}[node distance=5.0em, font=\small]
        
        \node [state12, initial above] (A) {$RR$};
        \node [state1, accepting, left of=A] (C) {$GR$};
        \node [state2, accepting, right of=A] (B) {$RG$};
        
        \draw (A) edge [c, bend left] node [above] {$\sigma_{\mathrm{green2}}$} (B);
        \draw (B) edge [c, bend left] node [below] {$\sigma_{\mathrm{red2}}$}  (A);
        \draw (A) edge [c, bend left] node [below] {$\sigma_{\mathrm{green1}}$}  (C);
        \draw (C) edge [c, bend left] node [above] {$\sigma_{\mathrm{red1}}$}  (A);
        
    \end{tikzpicture}
    \caption{Example FA for a system with multiple control goals.}
    \label{fig:example-trafficlights}
\end{figure}

Suppose a subset of states is chosen that isolates the behavior of the first traffic light (the set $\{ RR, \, GR \}$, as highlighted in blue). If the FA is nonblocking \emph{within} this set, it is known more specifically that the first traffic light can turn green without relying on the other traffic light. The same is possible separately for the second traffic light by choosing the set $\{ RR, \, RG \}$, as highlighted in orange.


\section{Modal nonblocking} \label{sec:modal-nonblocking}

In each of the previous examples, a more suitable nonblocking property is obtained by restricting the set of states that may be visited on the path to a marked state. This section formalizes that method by introducing the novel concept of modal nonblocking.

\subsection{Definition of modal nonblocking}
Modal nonblocking is based on a parameter $\mode \subseteq Q$, which is a subset of plant states, referred to as a mode. A mode may represent error-free states, states of a system configuration, or states related to a specific control goal.
For a state $q \in \mode$, modal nonblocking requires that it is possible to reach a marked state without leaving \mode. In other words, any state $q \centernot\in \mode$ should not be relied upon to reach a marked state. A state $q \centernot\in \mode$ itself is defined as modal nonblocking by default.

\begin{definition}[Modal state properties]
\label{def:modal-nonblocking-state}
Let $P = $ \automaton be an FA, let $\mode \subseteq Q$ be a state set of interest, and let $\trans{}_{\mode} = \trans{} \cap \ \mode \times \Sigma \times \mode$ be the set of transitions \emph{within} \mode. A state $q \in Q$ is:

\begin{itemize}
    \item \emph{modal nonblocking} for \mode, denoted \mnb{\mode}{q}, iff $q \in \mode \Rightarrow \exists q_{m} \in Q_{m} \, (q \transs{}_{\mode} q_{m})$;
    \item \emph{modal blocking} for \mode iff $\neg \mnb{\mode}{q}$;
    \item \emph{bad}, denoted \bad{q}, iff $\exists q' \in Q \, (\neg \mnb{\mode}{q'} \land q \dtranss{} q')$.
\end{itemize}
\end{definition}

Note that $\trans{}_{\mode}$ is introduced as the set of transitions within \mode. If there exists a sequence of transitions within \mode, then that sequence also exists in general.

\begin{lemma}\label{lem:transs-transsM}
Let $P = $ \automaton, let $q$ and $q'$ be states in $Q$, let $\mode \subseteq Q$ be a state set of interest, and let $\trans{}_{\mode} = \trans{} \cap \mode \times \Sigma \times \mode$.
Then, $q \transs{}_{\mode} q' \Rightarrow q \transs{} q'$.
\end{lemma}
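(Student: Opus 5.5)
The plan is to argue by induction on the length of the derivation of $q \transs{}_{\mode} q'$, following the inductive definition of $\transs{}$ in the preliminaries. That definition says $q \transs{}_{\mode} q'$ holds iff either $q = q'$, or $q \trans{}_{\mode} q''$ and $q'' \transs{}_{\mode} q'$ for some $q''$. Equivalently, $q \transs{}_{\mode} q'$ means there is a finite sequence $q = q_{0}', q_{1}', \dots, q_{n}' = q'$ with $q_{i}' \trans{}_{\mode} q_{i+1}'$ for all $i < n$. I would induct on $n$.

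In the base case $n = 0$, we have $q = q'$. Then $q \transs{} q'$ holds directly by the reflexive clause of the definition of $\transs{}$.

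In the inductive step, suppose $q \trans{}_{\mode} q''$ and $q'' \transs{}_{\mode} q'$ by a shorter derivation. Since $\trans{}_{\mode} = \trans{} \cap (\mode \times \Sigma \times \mode) \subseteq \trans{}$, the single step $q \trans{}_{\mode} q''$ gives $q \trans{\sigma} q''$ for some $\sigma \in \Sigma$, that is, $q \trans{} q''$. The induction hypothesis gives $q'' \transs{} q'$. The recursive clause of the definition of $\transs{}$ then yields $q \transs{} q'$.

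There is no real obstacle here. The only point needing care is to state explicitly that $\transs{}_{\mode}$ is the reflexive–transitive closure of $\trans{}_{\mode}$, exactly as $\transs{}$ is for $\trans{}$. Once that is fixed, the result is monotonicity of the reflexive–transitive closure under the inclusion $\trans{}_{\mode} \subseteq \trans{}$.
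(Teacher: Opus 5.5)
Your proof is correct and uses the same idea as the paper. The paper notes that every step of a $\transs{}_{\mode}$-sequence lies in $\trans{}$ because $\trans{}_{\mode} \subseteq \trans{}$, so the whole sequence is a $\transs{}$-sequence; your induction on the length of the derivation is the formal version of that argument, namely monotonicity of the reflexive--transitive closure.
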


\begin{proof}
Suppose there exists a sequence of transitions from $q$ to $q'$ in $\transs{}_{\mode}$. Every transition in this sequence must also be in $\trans{}$, because by the construction of $\trans{}_{\mode}$ it holds that $\trans{}_{\mode} \subseteq \trans{}$. Therefore, the whole sequence of transitions is also in $\transs{}$. Thus, $q \transs{}_{\mode} q' \Rightarrow q \transs{} q'$.
\end{proof}

Similar to finding the set of nonblocking, bad, and reachable states (see Lemma~\ref{lem:use-breach}), the set of modal nonblocking states in \mode can be found using \brs.
\begin{lemma}
\label{lem:mnb-breach}
Let $P = $ \automaton be an FA, and let $\mode \subseteq Q$ be a state set of interest. Then $\forall q \in \mode \, (\mnb{\mode}{q} \Leftrightarrow q \in \brs(\mode, \, Q_{m}, \, \trans{}))$. 
\end{lemma}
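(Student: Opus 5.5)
Since $q \in \mode$, the implication in Definition~\ref{def:modal-nonblocking-state} reduces \mnb{\mode}{q} to the plain reachability statement $\exists q_{m} \in Q_{m} \, (q \transs{}_{\mode} q_{m})$. The plan is to show that the set computed by $\brs(\mode, \, Q_{m}, \, \trans{})$ is exactly the set of states in \mode from which $\mode \cap Q_{m}$ is reachable via $\trans{}_{\mode}$. Two points need care. First, \brs adds a state $q \in \mode$ whenever $q \trans{} q'$ with $q' \in Q^{i}$. Since $Q^{i} \subseteq \mode$ by construction (each $Q^{i}$ is a subset of $Q_{c} = \mode$), every such step is a transition of $\trans{}_{\mode}$. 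Second, a path $q \transs{}_{\mode} q_{m}$ ends in $q_{m}$, and because the last transition lies in $\mode \times \Sigma \times \mode$ (or the path is empty and $q_{m} = q \in \mode$), we have $q_{m} \in \mode \cap Q_{m}$.

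For the direction ($\Leftarrow$), I would prove by induction on $i$ that every $q \in Q^{i}$ satisfies $q \transs{}_{\mode} q_{m}$ for some $q_{m} \in \mode \cap Q_{m}$.
\begin{itemize}
\item \emph{Base case.} $Q^{0} = \mode \cap Q_{m}$, so the empty path works.
\item \emph{Step.} A new state $q \in Q^{i+1}$ lies in \mode and has $q \trans{} q'$ with $q' \in Q^{i} \subseteq \mode$. This transition is in $\trans{}_{\mode}$, and prepending it to the path given for $q'$ by the induction hypothesis yields a path from $q$.
\end{itemize}
Since the output $Q_{r}$ equals some $Q^{i}$, this direction follows.

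For the direction ($\Rightarrow$), take $q \in \mode$ with a path $q = p_{n} \trans{}_{\mode} p_{n-1} \trans{}_{\mode} \cdots \trans{}_{\mode} p_{0} = q_{m}$. I would prove by induction on $k$ that $p_{k} \in Q^{k}$.
\begin{itemize}
\item \emph{Base case.} $p_{0} \in \mode \cap Q_{m} = Q^{0}$.
\item \emph{Step.} $p_{k+1} \in \mode$ and $p_{k+1} \trans{} p_{k}$ with $p_{k} \in Q^{k}$, so $p_{k+1} \in Q^{k+1}$ by line 3 of Algorithm~\ref{alg:reach}.
\end{itemize}
The sets $Q^{i}$ are monotonically increasing, and they stabilise because \mode is finite, which also guarantees termination. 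Once they stabilise at the final index $j$, every later iterate equals $Q^{j}$. Hence $Q^{k} \subseteq Q_{r}$ for all $k$, including $k > j$, and so $q \in Q_{r}$.

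The main obstacle is bookkeeping rather than any real difficulty. \brs is called with the unrestricted relation $\trans{}$, not $\trans{}_{\mode}$, so the argument must rely on the membership filter $q \in Q_{c}$ together with the invariant $Q^{i} \subseteq \mode$ to recover the restriction. Making this invariant explicit at the start should make both inductions straightforward.
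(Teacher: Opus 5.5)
Your proof is correct and takes essentially the same route as the paper. The paper also walks through the iterations of \brs and uses the invariant $Q^{i} \subseteq \mode$ to see that every step added on line~3 is a transition of $\trans{}_{\mode}$, so that $Q^{i}$ is exactly the set of states of \mode that reach $\mode \cap Q_{m}$ within $i$ such transitions; your two separate inductions, and your explicit checks that the endpoint $q_{m}$ lies in \mode and that paths longer than the stopping index are still captured after stabilisation, just make precise points the paper leaves implicit.
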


\begin{proof}
This proof is structured over the lines in \brs where $Q^{i}$ is constructed, which becomes the output $Q_{r}$ on line~6.
\begin{itemize}
\item[1:] $Q^{0}$ is constructed as $\mode \cap Q_{m}$. Thus, $Q^{0}$ is the set of states in \mode that can reach a state in $\mode \cap Q_{m}$ using 0 transitions in $\trans{}$, while staying in \mode.

\item[3:] $Q^{i+1}$ is constructed as $Q^{i}$ extended with the set of states in \mode that can reach a state in $Q^{i}$ after 1 transition in $\trans{}$. Since $Q^{i} \subseteq \mode$, both $q$ and $q'$ are in \mode. Thus, $Q^{i+1}$ is the set of states in \mode that can reach a state in $\mode \cap Q_{m}$ using $i+1$ or less transitions in $\trans{}$, while staying in \mode.

\item[5:] The loop finishes when $Q^{i} = Q^{i-1}$. Because $Q^{i-1} \subseteq Q^{i}$, the set of states $Q^{i}$ only grows. Since $Q^{i} \subseteq \mode$ and \mode is finite, the expansion of $Q^{i}$ must be finite.
As a result, $Q_{r}$ is the set of all states in \mode that can reach a state in $\mode \cap Q_{m}$ using any number of transitions in $\trans{}$, while staying in \mode.
\end{itemize}
At every step of its construction, it holds that \mnb{\mode}{q} for all states in $Q^{i}$. The output of \brs, $Q_{r}$, is the set of all states in \mode for which \mnb{\mode}{q} holds. Thus, for $q \in \mode$, it is proven that $q \in \brs( \mode, \, Q_{m}, \, \trans{} ) \Leftrightarrow \mnb{\mode}{q}$.
\end{proof}

The modal nonblocking property for an FA requires that all its reachable states are modal nonblocking. It does not pose additional restrictions on the reachability of a state.

\begin{definition}[Modal nonblocking FA]
\label{def:modal-nonblocking-FA}
Let $P = $ \automaton be an FA, and let $\mode \subseteq Q$ be a state set of interest. $P$ is \emph{modal nonblocking} for \mode, denoted \mnb{\mode}{P}, iff $\forall q \in Q \, (\re{q} \Rightarrow \mnb{\mode}{q})$.
\end{definition}

The definition is illustrated using Example~1, where $\mode = \{ A, \, B, \, E\}$. All states in the FA are reachable. States $C$ and $D$ are not in \mode, and therefore they are modal nonblocking by default. State $E$ is modal nonblocking, since it is marked. States $A$ and $B$ are not modal nonblocking: both states are in \mode and cannot reach a marked state without visiting a state outside of \mode. Because not all reachable states are modal nonblocking, the FA is also not modal nonblocking.


\subsection{Modal nonblocking synthesis}
The goal of synthesis with modal nonblocking is to produce a supervisor that is modal nonblocking, controllable, and maximally permissive.
Algorithm~\ref{alg:synthesis-modal-nonblocking} presents a synthesis algorithm for this purpose, relying on the \brs function from Algorithm~\ref{alg:reach} to keep the presentation compact.

Algorithm~\ref{alg:synthesis-modal-nonblocking} is a small adjustment of the standard synthesis algorithm used in~\cite{Thuijsman2022}, where the changes are highlighted in blue.
Note that this algorithm is not optimized for efficiency; however, the computational complexity of modal nonblocking is equal to that of synthesis with the standard nonblocking property.

\begin{algorithm}
\caption{Synthesis with modal nonblocking.} \label{alg:synthesis-modal-nonblocking}
\begin{algorithmic}[1]
\item[] \textbf{Input: } plant \automaton \textcolor{blue}{and $\mode \subseteq Q$}
\item[] \textbf{Output: } supervisor \supervisorconstruct
\State $k \gets 0, Q^{k} \gets Q$
\Repeat
    \State \textcolor{blue}{$Q^{k}_{\mode} \gets Q^{k} \cap \mode$}
    \State $N^{k} \gets \brs(Q_{\textcolor{blue}{\mode}}^{k}, \, Q_{m}, \, \trans{})$
    \State $B^{k} \gets \brs(Q^{k}, \, Q_{\textcolor{blue}{\mode}}^{k} \setminus N^{k}, \, \dtrans{})$
	\State $Q^{k+1} \gets Q^{k} \setminus B^{k}$
	\State $k \gets k+1$
\Until $Q^{k} = Q^{k-1}$
\State $Q_{S} \gets$ \brs($Q^{k}, \, \{q_{0}\}, \, \trans{}^{-1}$)
\end{algorithmic}
\end{algorithm}

The newly introduced input set \mode is used on line~3 to specify the set of states $Q^{k}_{\mode}$ for which modal nonblocking should be determined. States outside of \mode are always modal nonblocking and are therefore excluded from this set.
The set of modal nonblocking states is obtained on line~4 using Lemma~\ref{lem:mnb-breach}.
The set of modal blocking states is obtained by subtracting the set of modal nonblocking states $N^{k}$ from $Q^{k}_{\mode}$, because only states in $Q^{k}_{\mode}$ can become modal blocking.
The set of bad states $B^{k}$ is obtained on line~5, using the second item of Lemma~\ref{lem:use-breach}. Here, the considered state set is $Q^{k}$, because any state in $Q^{k}$ can become a bad state: this must not be limited to only states in $Q^{k}_{\mode}$.

After the computation of bad states $B^{k}$, they are removed from $Q^{k}$ on line~6. Thus, the algorithm iteratively refines the set $Q^{k}$, each time taking out any blocking states, until $Q^{k}$ is the same in two consecutive iterations.
Finally, \brs is used one more time on line~9 to obtain the set of reachable states, using the third item of Lemma~\ref{lem:use-breach}.
Because at the end all states in $Q^{k}$ are modal nonblocking, it also holds that all reachable states are modal nonblocking. Thus, 
a modal nonblocking FA is constructed.

In fact, Algorithm~\ref{alg:synthesis-modal-nonblocking} constructs a supervisor that is modal nonblocking for \mode, controllable, and maximally permissive.
However, instead of proving the correctness of Algorithm~\ref{alg:synthesis-modal-nonblocking} directly, the full proof is given in Theorem~\ref{thm:algo-multimodal} for synthesis with multimodal nonblocking, of which synthesis with modal nonblocking is a special case. 


\section{Multimodal nonblocking} \label{sec:multimodal-nonblocking}

Examples~2 and~3 have shown the use of modal nonblocking with respect to multiple state sets of interest. This motivates the novel notion of multimodal nonblocking, which is introduced and defined in this section.

\subsection{Definition of multimodal nonblocking}
Multimodal nonblocking is defined for multiple state sets of interest, collected in the set \mmode, where $\mmode \subseteq 2^{Q}$. 

\begin{definition}[Multimodal state properties]
\label{def:multimodal-nonblocking-state}
Let $P = $ \automaton be an FA, and let $\mmode \subseteq 2^{Q}$.
A state $q \in Q$ is:
\begin{itemize}
    \item \emph{multimodal nonblocking} for \mmode, denoted \mmnb{q}, iff $\forall \mode \in \mmode \, (\mnb{\mode}{q})$;
    \item \emph{multimodal blocking} for \mmode iff $\neg\mmnb{q}$;
    \item \emph{bad}, denoted \bad{q}, iff $\exists q'\in Q \, (\neg\mmnb{q'} \land q \dtranss{} q')$.
\end{itemize}
\end{definition}

This definition leverages the modal nonblocking definition from Definition~\ref{def:modal-nonblocking-state}, adding that it must hold for every state set of interest in \mmode.
Note that a state that is not contained in any state sets of interest is always multimodal nonblocking. In addition, a marked state is always multimodal nonblocking.

\begin{lemma}
\label{lem:mmnb-marked}
Let $P =$ \automaton and let $\mmode \subseteq 2^{Q}$. Then,
$\forall q \in Q_{m} \, (\mmnb{q})$.
\end{lemma}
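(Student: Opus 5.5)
The plan is to unfold the definitions down to the level of a single mode and observe that the empty transition sequence always witnesses modal nonblocking at a marked state. Fix an arbitrary $q \in Q_{m}$. By Definition~\ref{def:multimodal-nonblocking-state}, proving \mmnb{q} amounts to proving \mnb{\mode}{q} for every $\mode \in \mmode$. So we also fix an arbitrary $\mode \in \mmode$ and argue using Definition~\ref{def:modal-nonblocking-state}.

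That definition states \mnb{\mode}{q} as the implication $q \in \mode \Rightarrow \exists q_{m} \in Q_{m} \, (q \transs{}_{\mode} q_{m})$. We split into two cases.
\begin{itemize}
\item If $q \notin \mode$, the antecedent is false, so the implication holds vacuously. This is the ``modal nonblocking by default'' clause.
\item If $q \in \mode$, take the witness $q_{m} := q$, which lies in $Q_{m}$ by assumption. The notation $\transs{}_{\mode}$ is defined reflexively, since $q \transs{}_{\mode} q'$ holds in particular when $q = q'$. Hence $q \transs{}_{\mode} q$ holds via the empty sequence of transitions, and this sequence never leaves \mode.
\end{itemize}

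In both cases \mnb{\mode}{q} holds. Since \mode was arbitrary, \mmnb{q} holds, and since $q \in Q_{m}$ was arbitrary, the lemma follows. If $\mmode = \emptyset$, the universal quantifier in Definition~\ref{def:multimodal-nonblocking-state} is vacuous and the claim is immediate.

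There is no genuine obstacle. The only point requiring care is to invoke the reflexive clause of $\transs{}$ (the ``or $q = q'$'' case), which carries over to $\transs{}_{\mode}$ as stated in the notation list, rather than trying to produce a nonempty path. An alternative route goes through Lemma~\ref{lem:mnb-breach}: for $q \in \mode \cap Q_{m}$, we have $q \in Q^{0} = \mode \cap Q_{m} \subseteq \brs(\mode, \, Q_{m}, \, \trans{})$. The direct argument is simpler.
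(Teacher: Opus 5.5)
Your proof is correct and follows the same route as the paper. The paper also discharges the implication in the modal nonblocking definition for each $q \in Q_{m}$ and each mode by taking the zero-length path from $q$ to itself, using the reflexive case of the path relation. Your explicit split-off of the vacuous case $q \notin M$ is only a cosmetic addition.
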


\begin{proof}
For all $q \in Q_{m}$ and $\mode \in \mmode$, it is certain that \mnb{\mode}{q}, since $q \in \mode \Rightarrow \exists q_{m} \in Q_{m} \, (q \transs{}_{\mode} q_{m})$ is already true by not taking any transitions. It thus holds that $\mmnb{q}$ for all $q \in Q_{m}$.
\end{proof}

Multimodal nonblocking is suitable for Example~3, where $\mmode = \{\mode_{b}, \, \mode_{o}\}$ (as indicated by the blue and orange colored states in Figure~\ref{fig:example-trafficlights}, respectively). State $RR$ belongs to both $\mode_{b}$ and $\mode_{o}$, so \mmnb{RR} iff $\mnb{M_{b}}{RR} \land \mnb{M_{o}}{RR}$. Observe that indeed $RR$ is able to reach marked state $GR$ while staying within $\mode_{b}$, and $RR$ is also able to reach marked state $RG$ while staying within $\mode_{o}$.
Thus, \mmnb{RR} holds. 
By Lemma~\ref{lem:mmnb-marked}, both marked states $GR$ and $RG$ are multimodal nonblocking. When \mmnb{q} holds for all reachable states $q$ of an FA $P$, the FA itself is multimodal nonblocking.

\begin{definition}[Multimodal nonblocking FA]
\label{def:multimodal-nonblocking-FA}
Let $P = $ \automaton be an FA, and let $\mmode \subseteq 2^{Q}$.
$P$ is \emph{multimodal nonblocking} for \mmode, denoted \mmnb{P}, iff 
$\forall q \in Q \, (\re{q} \Rightarrow \mmnb{q})$.
\end{definition}

This can now be used to show that Example~2 is not multimodal nonblocking for $\mmode = \{\mode_{b}, \, \mode_{o}\}$ (as indicated by the blue and orange colored states in Figure~\ref{fig:example-dynamic-feature-configuration}, respectively). 
All states are reachable, and all states are modal nonblocking for $M_{b}$. However, states $B1$ and $B2$ are not modal nonblocking for $M_{o}$, since there is no marked state that can be reached while staying within $M_{o}$. Therefore, the FA is clearly not multimodal nonblocking for \mmode.


\subsection{Multimodal nonblocking synthesis}

The algorithm for synthesis with multimodal nonblocking is presented in Algorithm~\ref{alg:synthesis-multimodal-nonblocking}, where the changes with respect to Algorithm~\ref{alg:synthesis-modal-nonblocking} are highlighted in blue.  Note that the computational complexity of Algorithm~\ref{alg:synthesis-multimodal-nonblocking} scales linearly with $| \mmode | $.

\begin{algorithm}
\caption{Synthesis with multimodal nonblocking.} \label{alg:synthesis-multimodal-nonblocking}
\begin{algorithmic}[1]
\item[] \textbf{Input: } plant \automaton \textcolor{blue}{and $\mmode \subseteq 2^{Q}$}
\item[] \textbf{Output: } supervisor \supervisorconstruct
\State $k \gets 0, Q^{k} \gets Q$
\Repeat
    \ForAll{\textcolor{blue}{$\mode \in \mmode$}}
        \State $Q^{k}_{\mode} \gets Q^{k} \cap \mode$
        \State $N^{k}_{\textcolor{blue}{\mode}} \gets \brs(Q^{k}_{\mode}, \, Q_{m}, \, \trans{})$
    \EndFor
    \State $B^{k} \gets \brs(Q^{k}, \, \textcolor{blue}{\bigcup_{\mode \in \mmode}[Q^{k}_{\mode} \setminus N^{k}_{\mode}]}, \, \dtrans{})$
	\State $Q^{k+1} \gets Q^{k} \setminus B^{k}$
	\State $k \gets k+1$
\Until $Q^{k} = Q^{k-1}$
\State $Q_{S} \gets \brs(Q^{k}, \, \{q_{0}\}, \, \trans{}^{-1})$
\end{algorithmic}
\end{algorithm}

This algorithm iterates over the state sets \mode defined in \mmode. Each iteration constructs $N^{k}_{\mode}$, describing the set of nonblocking states for each \mode. On line~7, the term $\bigcup_{\mode \in \mmode}[Q^{k}_{\mode} \setminus N^{k}_{\mode}]$ combines the blocking states for each \mode, forming $Q_{\mathrm{blocking}}$.
After that, the synthesis algorithm continues in the same way as Algorithm~\ref{alg:synthesis-modal-nonblocking}.

Algorithm~\ref{alg:synthesis-multimodal-nonblocking} also performs synthesis with modal nonblocking or standard synthesis for specific inputs:
\begin{itemize}
    \item For $\mmode = \{\mode\}$, synthesis with modal nonblocking for \mode is performed, because lines~3 to~6 are executed once for \mode.
    \item For $\mmode = \{Q\}$, standard synthesis is performed, because lines~3 to~6 are executed once for $\mode = Q$.
\end{itemize}

For any plant model and \mmode, Algorithm~\ref{alg:synthesis-multimodal-nonblocking} constructs a supervisor that is multimodal nonblocking for \mmode. Additionally, the controllability and maximal permissiveness is ensured.

\begin{theorem}
\label{thm:algo-multimodal}
Let $P = $ \automaton be an FA with a set of uncontrollable events $\Sigma_{u} \subseteq \Sigma$, and let $\mmode \subseteq 2^{Q}$.
Algorithm~\ref{alg:synthesis-multimodal-nonblocking} constructs a supervisor $S = $ \supervisor that is multimodal nonblocking for \mmode, controllable for $P$, and maximally permissive.
\end{theorem}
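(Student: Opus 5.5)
The proof establishes the three properties separately, after fixing notation. Let $K$ be the iteration at which the loop of Algorithm~\ref{alg:synthesis-multimodal-nonblocking} terminates, and write $Q^{*} = Q^{K} = Q^{K-1}$. Termination is immediate: $Q^{k+1} \subseteq Q^{k}$ and $Q$ is finite.

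At the fixpoint, $B^{K-1} = \emptyset$. Since $\brs$ always returns a superset of $Q_{c} \cap Q_{s}$, the start set $\bigcup_{\mode \in \mmode}[Q^{*}_{\mode} \setminus N^{*}_{\mode}]$ is empty. Hence $Q^{*} \cap \mode \subseteq N^{*}_{\mode}$ for every $\mode \in \mmode$. The supervisor $S$ has $Q_{S} \subseteq Q^{*}$ and transitions $\trans{} \cap (Q_{S} \times \Sigma \times Q_{S})$. Here $Q_{S}$ is exactly the set of states reachable from $q_{0}$ inside $Q^{*}$ (third item of Lemma~\ref{lem:use-breach}), so $S$ is reachable. (If $q_{0} \notin Q^{*}$, the supervisor is empty and everything holds trivially or no proper supervisor exists; I would note this corner case.)

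\textbf{Multimodal nonblocking.} Take $q \in Q_{S}$ and $\mode \in \mmode$ with $q \in \mode$. Then $q \in N^{*}_{\mode} = \brs(Q^{*} \cap \mode, Q_{m}, \trans{})$. By the argument of Lemma~\ref{lem:mnb-breach} applied to the mode $Q^{*} \cap \mode$, there is a path $q \transs{}_{Q^{*} \cap \mode} q_{m}$ with $q_{m} \in Q_{m}$. Every state on this path lies in $Q^{*}$ and is reached from $q \in Q_{S}$, so all of them lie in $Q_{S}$. The path therefore exists in $S$ within $\mode \cap Q_{S}$, and $q_{m} \in Q_{m} \cap Q_{S}$ is marked in $S$. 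Hence $\mnb{\mode}{q}$ holds in $S$, and consequently $\mmnb{S}$.

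\textbf{Controllability.} Let $q \in Q_{S}$ and $q \trans{\sigma_{u}} q'$ in $P$. If $q' \notin Q^{*}$, then $q'$ was removed in some iteration $k$, i.e.\ $q' \in B^{k}$. Because $q \dtrans{} q'$ and $q \in Q^{k}$, the definition of $\brs$ with $\dtrans{}$ gives $q \in B^{k}$, contradicting $q \in Q^{*}$. So $q' \in Q^{*}$, and it is reachable via $q$, so $q' \in Q_{S}$ and the transition is kept in $\trans{}_{S}$.

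\textbf{Maximal permissiveness.} This is the main obstacle. Let $S'$ be any proper supervisor (multimodal nonblocking, controllable, reachable, $S' \sqsubseteq P$). I would show by induction on $k$ that $Q_{S'} \subseteq Q^{k}$. The base case $Q^{0} = Q$ is trivial. For the step, suppose $Q_{S'} \subseteq Q^{k}$ and some $q \in Q_{S'} \cap B^{k}$. Then there is an uncontrollable path $q \dtranss{} q'$ inside $Q^{k}$ ending at $q' \in Q^{k} \cap \mode \setminus N^{k}_{\mode}$ for some $\mode$.

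By controllability of $S'$, applied repeatedly along the path, the path is in $S'$, so $q' \in Q_{S'}$. Since $S'$ is multimodal nonblocking and reachable, $q'$ reaches a marked state via transitions of $S'$ within $\mode$. All those states lie in $Q_{S'} \cap \mode \subseteq Q^{k} \cap \mode$, and the transitions are plant transitions. Hence $q' \in N^{k}_{\mode}$, a contradiction. So $Q_{S'} \subseteq Q^{*}$.

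Reachability of $S'$ from $q_{0}$ using plant transitions inside $Q^{*}$ then gives $Q_{S'} \subseteq Q_{S}$. Further, $\trans{}_{S'} \subseteq \trans{} \cap (Q_{S} \times \Sigma \times Q_{S}) = \trans{}_{S}$, and the marked states agree, since both are intersections with $Q_{m}$. Therefore $S' \sqsubseteq S$.

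The delicate point is carefully matching the "in $S'$" notion of modal nonblocking (with marked states $Q_{m} \cap Q_{S'}$ and modes intersected with $Q_{S'}$) to the plant-level $\brs$ characterization.
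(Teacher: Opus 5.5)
Your proof is correct. Its skeleton matches the paper's: the three properties are proved separately, and the nonblocking part rests on the same fixpoint observation, namely that $B^{K-1}=\emptyset$ forces $Q^{*}\cap M\subseteq N^{*}_{M}$ for every $M\in\mathcal{M}$.

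The real difference is in maximal permissiveness. The paper argues by contradiction and splits $S'\not\sqsubseteq S$ into state, transition and marking cases. It then invokes Lemma~\ref{lem:use-breach} to claim that every state removed on line~8 is bad and every state discarded on line~11 is unreachable. That lemma, however, describes the unpruned plant, whereas $B^{k}$ and $Q_S$ are computed inside $Q^{k}$ and $Q^{*}$. A state can become blocking only after earlier removals, and a state outside $Q_S$ can still be reachable in $P$ through removed states.

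Your induction $Q_{S'}\subseteq Q^{k}$ supplies exactly what is missing. Controllability of $S'$ pulls the uncontrollable path into $Q_{S'}$. Multimodal nonblocking of $S'$, together with the induction hypothesis, then places $q'$ in $N^{k}_{M}$. The line-11 case follows because every state of $S'$ is reached from $q_0$ inside $Q_{S'}\subseteq Q^{*}$.

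The nonblocking part differs in the same way. The paper enlarges $\mathrm{BRS}(Q^{K}_{M},Q_m,\rightarrow)$ to $\mathrm{BRS}(M,Q_m,\rightarrow)$, so it only obtains modal nonblocking of $q$ in the plant. You apply Lemma~\ref{lem:mnb-breach} to the mode $Q^{*}\cap M$ and note that the resulting path stays in $Q_S$. This gives the property in $S$ itself, which is what Definition~\ref{def:multimodal-nonblocking-FA} requires of the supervisor.

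Your controllability step uses the closure of $B^{k}$ under uncontrollable predecessors in $Q^{k}$, a sharper form of the paper's informal argument. The paper's route is shorter because it reuses Lemma~\ref{lem:use-breach} directly. Yours costs a short induction over iterations but handles the iterative pruning correctly.
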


\begin{proof}
\textbf{Multimodal nonblocking for \mmode:}
Starting from Definition~\ref{def:multimodal-nonblocking-FA}, it must be proven that $\forall q \in Q_{S} \, ( \re{q} \Rightarrow \mmnb{q} )$. Here, \mmnb{q} can be replaced with Definitions~\ref{def:modal-nonblocking-state} and~\ref{def:multimodal-nonblocking-state} to get: $\forall q \in Q_{S} \, ( \re{q} \Rightarrow \forall \mode \in \mmode \, ( q \in \mode \Rightarrow \exists q_{m} \in Q_{m} \, ( q \transs{}_{\mode} q_{m} ) ) )$.
Let $q \in Q_{S}$ and assume \re{q}. Then, let $\mode \in \mmode$ and assume $q \in \mode$.
It remains to be proven that $\exists q_{m} \in Q_{m} \, ( q \transs{}_{\mode} q_{m} )$.

On line~11, $Q_{S}$ is constructed based on \brs. A property of \brs is that $Q_{r} \subseteq Q_{c}$. Therefore, $Q_{S} \subseteq Q^{k}$, and since $q \in Q_{S}$, also $q \in Q^{k}$.
Suppose the loop terminates when $K = k - 1$, so when $Q^{K+1} = Q^{K}$, following line~10. From line~8, it can be concluded that $B^{K} = \emptyset$, applying $Q^{K+1} = Q^{K}$ and knowing that neither set is empty (since $q \in Q^{K+1}$). $B^{K}$ is the output of \brs on line~7. This output is at least $Q_{c} \cap Q_{s}$, so it is the empty set iff $Q_{c} \cap Q_{s}$ is also the empty set. So, taking the \brs inputs on line~7, it must hold that $Q^{K} \cap ( \bigcup_{\mode \in \mmode} [Q_{\mode}^{K} \setminus N_{\mode}^{K}] ) = \emptyset$.

From line~4, it is known that $Q^{K}_{\mode} \subseteq Q^{K}$ for every \mode. Then, also $Q^{K}_{\mode} \setminus N^{K}_{\mode} \subseteq Q^{K}$ for every \mode. Apply that $A \cap B = A$ when $A \subseteq B$ to obtain $\bigcup_{\mode \in \mmode} [Q_{\mode}^{K} \setminus N_{\mode}^{K}] = \emptyset$. Thus, $Q^{K}_{\mode} \subseteq N^{K}_{\mode}$ for every \mode.

Combining that $q \in \mode$ as well as $q \in Q^{K}$, by line~4 also $q \in Q^{K}_{\mode}$.
Since $Q^{K}_{\mode} \subseteq N^{K}_{\mode}$ for every \mode, it is known that $q \in N^{K}_{\mode}$ for every \mode. $N^{K}_{\mode}$ is the output state set of $\brs(Q^{K}_{\mode}, \, Q_{m}, \, \trans{})$ on line~5. Because $Q^{K}_{\mode} \subseteq \mode$ by line~4, it also holds that $q \in \brs(\mode, \, Q_{m}, \, \trans{})$.
By Lemma~\ref{lem:mnb-breach}, this means \mnb{\mode}{q}. Thus, by Definition~\ref{def:modal-nonblocking-state}, $\exists q_{m} \in Q_{m} \, ( q \transs{}_{\mode} q_{m} )$, which was required for the proof.

\textbf{Controllable:}
Start from Definition~\ref{def:controllable}: $\trans{} \cap (Q_{S} \times \Sigma_{u} \times Q) \subseteq \trans{}_{S}$. This holds if $(q_{S}, \sigma_{u}, q) \in \trans{} \Rightarrow (q_{S}, \sigma_{u}, q) \in \trans{}_{S}$, where $q_{S} \in Q_{S}$, $\sigma_{u} \in \Sigma_{u}$, and $q \in Q$.
Assume $(q_{S}, \sigma_{u}, q) \in \trans{}$, and prove $(q_{S}, \sigma_{u}, q) \in \trans{}_{S}$.

Since $q_{S} \in Q_{S}$, also $q_{S} \in Q^{k}$ by line~11. Suppose $q_{S} \in B^{k}$ for some $k$. Then $q_{S} \centernot\in Q^{k}$ by line~8, which is a contradiction. Therefore, $q_{S} \centernot\in B^{k}$ for any $k$. This means that $\centernot\exists q' \in Q \, (\neg \mmnb{q'} \land q_{S} \dtranss{} q')$: for any multimodal blocking state $q' \in Q$, there is no sequence of uncontrollable events from $q_{S}$ to $q'$.
In other words, for all $q_{S} \in Q_{S}$ any sequence of uncontrollable events leads to a state that is multimodal nonblocking. That state, consequently, is also in $Q_{S}$.
Since $\trans{}_{S} = \trans{} \cap (Q_{S} \times \Sigma \times Q_{S})$, it is certain that $(q_{S}, \sigma_{u}, q) \in \trans{}_{S}$.

\textbf{Maximally permissive:}
Proof is done by contradiction.
Suppose there is a proper supervisor FA $S' = ( Q_{S}', \, \allowbreak \Sigma_{S}, \, \allowbreak \trans{}_{S}', \, \allowbreak Q_{m,S}', \, \allowbreak q_{0,S} ) \sqsubseteq P$ such that $S' \centernot\sqsubseteq S$.
By Definition~\ref{def:maximally-permissive}, this means that either $Q_{S}' \centernot\subseteq Q_{S}$, or $\trans{}_{S}' \centernot\subseteq \trans{}_{S}$, or $Q_{m,S}' \centernot= Q_{m,S} \cap Q_{S}'$.
\begin{itemize}
    \item $Q_{S}' \centernot\subseteq Q_{S}$: Assume $\exists q' \, (q' \in Q_{S}' \land q' \centernot\in Q_{S})$. There are two lines in Algorithm~\ref{alg:synthesis-multimodal-nonblocking} where states are potentially removed from $Q$ to obtain $Q_{S}$: line~8 and line~11.
    Suppose, on line~11, $q' \centernot\in \brs( Q^{k}, \, \{q_{0}\}, \, \trans{}^{-1} )$. Then, by the third item of Lemma~\ref{lem:use-breach}, $\neg \re{q'}$.
    Suppose, on line~8, $q' \in B^{k}$ for some $k$. Then, $q' \in \brs(Q^{k}, \, Q_{\mathrm{blocking}}, \, \dtrans{})$. By Lemma~\ref{lem:use-breach}, this means that $\bad{q'}$. Thus, for all states $q' \centernot\in Q_{S}$, it holds that $\bad{q'} \lor \re{q'}$. If $q' \in Q_{S}'$, this contradicts the assumption that $S'$ is reachable and proper. 
    Therefore, if $q' \centernot\in Q_{S}$, it cannot be that $q' \in Q_{S}'$, because $S'$ would not be proper. Thus, $Q_{S}' \subseteq Q_{S}$.
    \item $\trans{}_{S}' \centernot\subseteq \trans{}_{S}$: Assume $\exists t' \, (t' \in \trans{}_{S}' \land t' \centernot\in \trans{}_{S})$, where $t' = (q_{S,1}', \sigma, q_{S,2}')$ and $q_{S,1}', \, q_{S,2}' \in Q_{S}'$. Transition $t'$ could only have been removed from $\trans{}_{S}$ when $q_{S,1}' \centernot\in Q_{S}$ or $q_{S,2}' \centernot\in Q_{S}$. For every $q \centernot\in Q_{S}$, it has already been shown that either $\bad{q}$ or $\neg\re{q}$. Since $S'$ is proper, then also $q_{S,1}', \, q_{S,2}' \centernot\in Q_{S}'$.
    Thus, it cannot be that $t' \in \trans{}_{S}'$, and therefore $\trans{}_{S}' \subseteq \trans{}_{S}$.
    \item $Q_{m,S}' \centernot= Q_{m,S} \cap Q_{S}'$: Assume $\exists q_{m}' \, (q_{m}' \in Q_{m,S}' \land q_{m}' \centernot\in Q_{m,S} \cap Q_{S}')$. Given that $Q_{S}' \subseteq Q_{S}$, $q_{m}' \in Q_{m,S}'$, and $Q_{m,S}' \subseteq Q_{S}'$, it holds that $q_{m}' \in Q_{S}'$. Then $Q_{m,S}' \centernot= Q_{m,S} \cap Q_{S}'$ only if $q_{m}' \centernot\in Q_{m,S}$. However, in Algorithm~\ref{alg:synthesis-multimodal-nonblocking}, the marking of states is not adjusted. Thus, $Q_{m,S}' = Q_{m,S} \cap Q_{S}'$.
\end{itemize}
It has been shown that $Q_{S}' \subseteq Q_{S}$, $\trans{}_{S}' \subseteq \trans{}_{S}$, and $Q_{m,S}' \subseteq Q_{m,S}$ for a proper supervisor FA $S'$. Therefore, for each proper supervisor FA $S'$ it holds that $S' \sqsubseteq S$.
\end{proof}


\section{Related nonblocking variants} \label{sec:nonblocking-variants}
This section briefly recalls other nonblocking variants and compares them to modal and multimodal nonblocking. It is shown that modal and multimodal nonblocking are unique nonblocking variants. In addition, conditions are provided under which one nonblocking variant implies the other.

\subsection{Standard nonblocking}

Neither standard nonblocking (see Definition~\ref{def:nonblocking-FA}) nor modal nonblocking implies the other.
\begin{proposition}~   
\begin{enumerate}
    \item There exists an FA $P$ and an $\mode \subseteq Q$ such that $\mnb{\mode}{P} \centernot\Rightarrow \snb{P}$.
    \item There exists an FA $P$ and an $\mode \subseteq Q$ such that $\snb{P} \centernot\Rightarrow \mnb{\mode}{P}$.
\end{enumerate}
\end{proposition}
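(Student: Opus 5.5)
Both items are existence claims, so the plan is to exhibit one counterexample FA for each, reusing the examples of Section~\ref{sec:examples} where possible. In each case I will check every reachable state against Definitions~\ref{def:nonblocking-FA} and~\ref{def:modal-nonblocking-FA}.

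For item~2 ($\snb{P} \centernot\Rightarrow \mnb{\mode}{P}$), I will take Example~1 (Fig.~\ref{fig:example-faults}) with $\mode = \{A,\,B,\,E\}$. The FA is standard nonblocking. From $A$ and $B$ the path $A \trans{\sigma_1} B \dtrans{\sigma_{\mathrm{error}}} C \trans{\sigma_1} D \dtrans{\sigma_{\mathrm{fix}}} E$ reaches the marked state $E$. From $C$ and $D$, $E$ is reachable via $D$, and $E$ is itself marked. All states are reachable from $A$. The modal analysis after Definition~\ref{def:modal-nonblocking-FA} already shows that $A$ is reachable but not $\mnb{\mode}{A}$, because every path from $A$ to $E$ passes through $C$. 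Hence $P$ is not modal nonblocking for \mode. Alternatively, Example~2 with $\mode = \{B_1,\,B_2\}$ would serve equally well.

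For item~1 ($\mnb{\mode}{P} \centernot\Rightarrow \snb{P}$), I need a reachable state that cannot reach any marked state at all, while every reachable state in \mode can reach a marked state within \mode. The construction exploits the fact that states outside \mode are modal nonblocking by default. I will take a two-state FA with $Q = \{A,\,B\}$, $q_0 = A$, $Q_m = \{A\}$, a single transition $A \trans{\sigma} B$, and $\mode = \{A\}$.

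Both states are reachable. $B$ has no outgoing transitions and is not marked, so $\neg\snb{B}$ and therefore $\neg\snb{P}$. On the modal side, $A \in \mode$ is marked, so $\mnb{\mode}{A}$ holds by taking zero transitions, as in Lemma~\ref{lem:mmnb-marked}. $B \notin \mode$, so $\mnb{\mode}{B}$ holds vacuously. Thus $\mnb{\mode}{P}$ holds. An even simpler option is $\mode = \emptyset$ with any blocking FA.

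No step is a real obstacle. The only care needed is in item~1: the implication from Definition~\ref{def:modal-nonblocking-state} must be correctly vacuous for states outside \mode, and the example must remain deterministic and reachable as the paper assumes. Both conditions hold for the chosen automaton.
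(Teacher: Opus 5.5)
Your proof is correct and takes the paper's approach of exhibiting explicit counterexamples; your item~1 automaton is exactly the paper's Fig.~\ref{fig:example-1}. For item~2 the paper uses the minimal two-state FA of Fig.~\ref{fig:example-2} ($C \trans{\sigma} D$ with $\mode = \{C\}$ and $Q_{m} = \{D\}$) rather than Example~1, but your choice of Fig.~\ref{fig:example-faults} with $\mode = \{A, B, E\}$ works equally well.
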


\begin{proof}
1) is shown in Fig.~\ref{fig:example-1}, and 2) is shown in Fig.~\ref{fig:example-2}.
\end{proof}

\begin{figure}[H]
\centering
\begin{subfigure}[t]{0.48\linewidth}
    \centering
    \begin{tikzpicture}[node distance=5.0em, font=\small]
        \node [state1, initial, accepting] (A) {$A$};
        \node [state, right of=A] (B) {$B$};
        \draw (A) edge [c] node [above] {$\sigma$} (B);
    \end{tikzpicture}
    \caption{$\mode = \{A\}$ and $Q_{m} = \{A\}$.}
    \label{fig:example-1}
\end{subfigure}\hfill
\begin{subfigure}[t]{0.48\linewidth}
    \centering
    \begin{tikzpicture}[node distance=5.0em, font=\small]
        \node [state1, initial] (C) {$C$};
        \node [state, right of=C, accepting] (D) {$D$};
        \draw (C) edge [c] node [above] {$\sigma$} (D) ;
    \end{tikzpicture}
    \caption{$\mode = \{C\}$ and $Q_{m} = \{D\}$.}
    \label{fig:example-2}
\end{subfigure}
\caption{Example FAs.}
\label{fig:examples}
\end{figure}

When $\mmode = \{ \mode \}$, the same examples can be used to show that neither standard nonblocking nor multimodal nonblocking implies the other.

While there is no general implication between standard nonblocking and modal nonblocking, there does exist an implication relation for the special case where $\mode = Q$.
\begin{corollary}
\label{cor:standard-nonblocking-comparison}
Let $P = $ \automaton be an FA. Then, $\mnb{Q}{P} \Leftrightarrow \snb{P}$.
\end{corollary}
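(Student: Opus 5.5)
The plan is to reduce the statement to a state-wise equivalence, $\mnb{Q}{q} \Leftrightarrow \snb{q}$ for every $q \in Q$. Once this holds, Definitions~\ref{def:modal-nonblocking-FA} and~\ref{def:nonblocking-FA} coincide word for word. Both require the respective state property for every reachable state, so $\mnb{Q}{P} \Leftrightarrow \snb{P}$ follows immediately.

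For the state-wise equivalence, I would first note that with $\mode = Q$ the restricted transition relation collapses to the full one:
\[
\trans{}_{Q} = \trans{} \cap (Q \times \Sigma \times Q) = \trans{},
\]
since $\trans{} \subseteq Q \times \Sigma \times Q$ by the definition of a finite automaton. Consequently $\transs{}_{Q}$ and $\transs{}$ are the same relation, because the reflexive-transitive closure is built identically from identical one-step relations. Next, the premise $q \in Q$ in Definition~\ref{def:modal-nonblocking-state} is always true. Hence $\mnb{Q}{q}$ reduces to $\exists q_{m} \in Q_{m} \, (q \transs{} q_{m})$, which is exactly $\snb{q}$ in Definition~\ref{def:state-properties}.

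For presentation, I would split this into two directions. The direction $\mnb{Q}{q} \Rightarrow \snb{q}$ follows directly from Lemma~\ref{lem:transs-transsM}. The converse, $\snb{q} \Rightarrow \mnb{Q}{q}$, uses the observation that every transition of a witnessing path lies in $Q \times \Sigma \times Q$, so the path is also a $\transs{}_{Q}$-path. Alternatively, I could give a single argument through Lemma~\ref{lem:mnb-breach} and the first item of Lemma~\ref{lem:use-breach}: both characterize the relevant set as $\brs(Q, Q_{m}, \trans{})$.

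I expect no genuine obstacle. The only point needing care is the converse direction, which relies on $\trans{}_{Q}$ equalling $\trans{}$ rather than merely being contained in it. This equality must come from the typing $\trans{} \subseteq Q \times \Sigma \times Q$ in the definition of an FA, so I would state that typing explicitly.
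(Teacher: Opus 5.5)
Your proposal is correct and takes the same route as the paper, whose proof simply states that Definitions~\ref{def:nonblocking-FA} and~\ref{def:modal-nonblocking-FA} coincide when $\mode = Q$. You spell out why they coincide: $\trans{}_{Q} = \trans{}$ because $\trans{} \subseteq Q \times \Sigma \times Q$, and the premise $q \in Q$ is always true.
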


\begin{proof}
Definitions~\ref{def:nonblocking-FA} and~\ref{def:modal-nonblocking-FA} coincide when $\mode=Q$.
\end{proof}

The same holds for multimodal nonblocking where $\mmode = \{Q\}$. In addition, the case where $\mmode$ covers $Q$ is considered.

\begin{proposition}
Let $P = $ \automaton be an FA, and let $\mmode \subseteq 2^{Q}$, where $\bigcup_{\mode \in \mmode} (\mode) = Q$. Then, $\mmnb{P} \Rightarrow \snb{P}$.
\end{proposition}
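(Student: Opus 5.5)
The plan is to unfold the definitions and argue pointwise on reachable states. By Definition~\ref{def:nonblocking-FA}, we must show $\forall q \in Q\,(\re{q} \Rightarrow \snb{q})$. So fix a state $q \in Q$ with $\re{q}$. Since $P$ is assumed multimodal nonblocking, Definition~\ref{def:multimodal-nonblocking-FA} yields $\mmnb{q}$. By Definition~\ref{def:multimodal-nonblocking-state}, this means $\mnb{\mode}{q}$ holds for every $\mode \in \mmode$.

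The covering hypothesis $\bigcup_{\mode \in \mmode} (\mode) = Q$ is used next. Since $q \in Q$, there is some $\mode \in \mmode$ with $q \in \mode$; fix such a mode. Then the implication in Definition~\ref{def:modal-nonblocking-state} is not vacuous, so $\mnb{\mode}{q}$ gives some $q_{m} \in Q_{m}$ with $q \transs{}_{\mode} q_{m}$.

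Lemma~\ref{lem:transs-transsM} then lifts this to $q \transs{} q_{m}$. By Definition~\ref{def:state-properties}, this is exactly $\snb{q}$. Since $q$ was an arbitrary reachable state, $\snb{P}$ follows.

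There is no real obstacle. The only point that needs care is the one the covering assumption addresses: without it, a reachable state lying outside every mode would be multimodal nonblocking vacuously and give no information, as in Fig.~\ref{fig:example-1}. The proof must therefore explicitly pick a mode containing $q$ before invoking modal nonblocking.
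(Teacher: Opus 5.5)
Your proof is correct and takes essentially the same route as the paper: use the covering hypothesis to place each reachable state $q$ in some mode $M \in \mathcal{M}$, and use modal nonblocking for that $M$ to obtain a path to a marked state that stays within $M$. You then lift that path to an unrestricted one via Lemma~\ref{lem:transs-transsM}, as the paper does, and your version is simply a more explicit, pointwise write-up of the same argument.
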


\begin{proof}
Because $\bigcup_{\mode \in \mmode} (\mode) = Q$, for every $q \in Q$ also $q \in \mode$ for some \mode. Given that $P$ is modal nonblocking for every $\mode \in \mmode$, for every reachable state in \mode there must exist a path to a marked state that stays within \mode.
By Lemma~\ref{lem:transs-transsM}, there generally exists a path to a marked state, and therefore $P$ must be standard nonblocking.\\
The opposite implication does not hold. This is shown by the FA in Fig.~\ref{fig:example-2} when $\mmode = \{ \{ C \}, \, \{ D \} \}$. For this $P$, it holds that \snb{P}, but not \mmnb{P}.
\end{proof}


\subsection{Generalized nonblocking}

For generalized nonblocking, a subset of states $Q' \subseteq Q$ is defined from which a marked state should be reachable.
\begin{definition}[Generalized nonblocking FA~\cite{Malik2008GeneralisedNonblocking}]
\label{def:generalized-nonblocking-automaton}
Let $P = $ \automaton be an FA, and let $Q' \subseteq Q$ be a state set. $P$ is \emph{generalized nonblocking} for $Q'$, denoted \gnb{Q'}{P}, iff $\forall q \in Q' \, (\re{q} \Rightarrow \snb{q})$.
\end{definition}

Neither generalized nonblocking nor modal nonblocking implies the other.
\begin{proposition}~   
\begin{enumerate}
    \item There exists an FA $P$, an $\mode \subseteq Q$ and a $Q' \subseteq Q$ such that $\mnb{\mode}{P} \centernot\Rightarrow \gnb{Q'}{P}$.
    \item There exists an FA $P$, an $\mode \subseteq Q$ and a $Q' \subseteq Q$ such that $\gnb{Q'}{P} \centernot\Rightarrow \mnb{\mode}{P}$.
\end{enumerate}
\end{proposition}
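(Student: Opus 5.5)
The plan is to prove both items by exhibiting explicit counterexamples, exactly as was done for the comparison with standard nonblocking. The earlier examples in Fig.~\ref{fig:examples} can likely be reused with a suitable choice of $Q'$. The key observation is that generalized nonblocking only constrains reachable states in $Q'$, and only through unrestricted paths. Modal nonblocking only constrains reachable states in \mode, but restricts paths to stay within \mode. Choosing $Q'$ and \mode to be ``misaligned'' should separate the two properties.

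For item~1, I would take the FA of Fig.~\ref{fig:example-1} with $\mode = \{A\}$ and $Q' = \{B\}$. State $A$ is marked, so \mnb{\mode}{A} holds by taking zero transitions. State $B \notin \mode$, so \mnb{\mode}{B} holds by default. Hence \mnb{\mode}{P}. On the other hand, $B$ is reachable via $A \trans{\sigma} B$ and has no outgoing transitions. Since $B \notin Q_{m}$, we get $\neg\snb{B}$, so \gnb{Q'}{P} fails.

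For item~2, I would take the FA of Fig.~\ref{fig:example-2} with $\mode = \{C\}$ and $Q' = \{C\}$ (or $Q' = Q$). State $C$ reaches marked state $D$ via $\sigma$, and $D$ is itself marked. So every reachable state in $Q'$ is nonblocking, and \gnb{Q'}{P} holds. However, $C \in \mode$ and the only path from $C$ to a marked state leaves \mode. Since $C$ is not marked, $\neg\mnb{\mode}{C}$. As $C$ is the initial state and thus reachable, \mnb{\mode}{P} fails.

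There is no real obstacle. The only care needed is checking each definition literally: the default case for states outside \mode, reachability of the witness state, and that zero-length paths count in $\transs{}_{\mode}$.
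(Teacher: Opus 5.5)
Your proof is correct and essentially matches the paper's: both reuse the two counterexample FAs of Fig.~\ref{fig:examples}. The only difference is cosmetic. The paper takes $Q' = Q$ in both items, so generalized nonblocking collapses to standard nonblocking, whereas you take $Q' = \{B\}$ for item~1 and $Q' = \{C\}$ for item~2 and check the definitions directly, which works equally well.
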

\begin{proof}
Standard nonblocking is a special case of generalized nonblocking where $Q'= Q$. Therefore, the examples from Fig.~\ref{fig:examples} show 1) and 2) for $Q' = Q$.
\end{proof}

While there is no general implication between generalized nonblocking and modal nonblocking, there is an implication relation for the special case where $Q' = \mode$.
\begin{proposition}
Let $P = $ \automaton be an FA, and let $Q' \subset Q$. Then, $\mnb{Q'}{P} \Rightarrow \gnb{Q'}{P}$.
\end{proposition}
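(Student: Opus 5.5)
The plan is to unfold both definitions and compare them state by state. By Definition~\ref{def:generalized-nonblocking-automaton}, proving \gnb{Q'}{P} requires showing that every $q \in Q'$ with \re{q} satisfies \snb{q}. So we fix an arbitrary $q \in Q'$ and assume \re{q}.

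First, from the hypothesis \mnb{Q'}{P} and Definition~\ref{def:modal-nonblocking-FA}, every reachable state is modal nonblocking for $Q'$. In particular, \mnb{Q'}{q} holds.

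Next, we unfold Definition~\ref{def:modal-nonblocking-state} with $\mode = Q'$. This definition says $q \in Q' \Rightarrow \exists q_{m} \in Q_{m} \, (q \transs{}_{Q'} q_{m})$. Since we have $q \in Q'$ by choice, the premise of the implication holds, so there is a marked state $q_{m}$ with $q \transs{}_{Q'} q_{m}$.

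Then we apply Lemma~\ref{lem:transs-transsM} with $\mode = Q'$. It turns the path within $Q'$ into an ordinary path $q \transs{} q_{m}$. This gives \snb{q} by Definition~\ref{def:state-properties}. Since $q$ was an arbitrary reachable state of $Q'$, \gnb{Q'}{P} follows.

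There is no real obstacle in this argument. The one point needing care is that the implication in the modal definition is only useful because $q$ is drawn from $Q'$ itself. This is exactly why the statement is restricted to $\mode = Q'$. Reachable states outside $Q'$ are modal nonblocking vacuously, but generalized nonblocking places no requirement on them. The strictness $Q' \subset Q$ plays no role in the argument; the case $Q' = Q$ is already covered by Corollary~\ref{cor:standard-nonblocking-comparison}.
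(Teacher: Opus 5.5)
Your proof is correct and follows the same route as the paper's. You unfold \mnb{Q'}{P} to get, for each reachable $q \in Q'$, a path within $Q'$ to a marked state, and then apply Lemma~\ref{lem:transs-transsM} to conclude \snb{q}; your remark that strict inclusion $Q' \subset Q$ is never used is also accurate.
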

\begin{proof}
It must be proven that \gnb{Q'}{P}, or by Definition~\ref{def:generalized-nonblocking-automaton}: $\forall q \in Q' \, ( \re{q} \Rightarrow \snb{q} )$. Assume \mnb{Q'}{P}, so by Definitions~\ref{def:modal-nonblocking-state} and~\ref{def:modal-nonblocking-FA}: $\forall q \in Q \, (\re{q} \Rightarrow ( q \in Q' \Rightarrow \exists q_{m} \in Q_{m} \, ( q \transs{}_{Q'} q_{m}) ) )$. Thus, for all reachable states $q \in Q'$ it holds that $\exists q_{m} \in Q_{m} \, ( q \transs{}_{Q'} q_{m})$. By Lemma~\ref{lem:transs-transsM}, then also $\exists q_{m} \in Q_{m} \, (q \transs{} q_{m})$, denoted \snb{q}. This means \snb{q} for all reachable states in $Q'$, which was to be proven.\\
The opposite implication does not hold.
This is shown by the FA in Fig.~\ref{fig:example-faults} when $Q' = \{ A, \, B, \, E\}$. For this $P$, it holds that \gnb{Q'}{P}, but not \mnb{Q'}{P}.
\end{proof}


\subsection{Progressive nonblocking}

For progressive nonblocking, a set of progressive events $\Sigma_{p} \subseteq \Sigma$ is defined. Only progressive events may be used to reach a marked state. 
\begin{definition}[Progressive nonblocking FA~\cite{Ware2014ProgressiveVerification}]
\label{def:progressive-nonblocking-automaton}
Let $P = $ \automaton be an FA, let $\Sigma_{p} \subseteq \Sigma$ be a set of progressive events, and let $\trans{}_{p} = \trans{} \cap Q \times \Sigma_{p} \times Q$ be the set of transitions labeled with progressive events. $P$ is \emph{progressive nonblocking} for $\Sigma_{p}$, denoted \pnb{P}, iff $\forall q \in Q \, (\re{q} \Rightarrow \exists q_{m} \in Q_{m} \, (q \transs{}_{p} q_{m}) )$.
\end{definition}

Neither progressive nonblocking nor modal nonblocking implies the other.
\begin{proposition}~   
\begin{enumerate}
    \item There exists an FA $P$, an $\mode \subseteq Q$ and a $\Sigma_{p} \subseteq \Sigma$ such that $\mnb{\mode}{P} \centernot\Rightarrow \pnb{P}$.
    \item There exists an FA $P$, an $\mode \subseteq Q$ and a $\Sigma_{p} \subseteq \Sigma$ such that $\pnb{P} \centernot\Rightarrow \mnb{\mode}{P}$.
\end{enumerate}
\end{proposition}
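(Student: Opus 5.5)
Both items are existence claims, so I will prove each one with a small explicit FA, as the paper does for the comparisons with standard and generalized nonblocking. The natural idea is to exploit that modal nonblocking restricts the \emph{states} visited, while progressive nonblocking restricts the \emph{events} used. These restrictions can be made to disagree independently.

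For item 1, I will reuse the FA of Fig.~\ref{fig:example-1} with $\mode = \{A\}$ and $Q_{m} = \{A\}$. It is modal nonblocking, since $A$ is marked and $B \notin \mode$. It is not even standard nonblocking, because $B$ is reachable and cannot reach a marked state. Progressive nonblocking implies standard nonblocking: by the argument of Lemma~\ref{lem:transs-transsM} with $\trans{}_{p} \subseteq \trans{}$, any path in $\transs{}_{p}$ is a path in $\transs{}$. So the FA fails $\pnb{P}$ for every choice of $\Sigma_{p}$, say $\Sigma_{p} = \Sigma$. A less degenerate alternative is Fig.~\ref{fig:example-faults} modified so that the only route to a marked state within \mode uses a non-progressive event. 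I will prefer the simple one.

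For item 2, I will take the FA of Fig.~\ref{fig:example-2}, with $C \trans{\sigma} D$, $Q_{m} = \{D\}$, $\mode = \{C\}$, and choose $\Sigma_{p} = \{\sigma\}$ (equivalently $\Sigma_{p} = \Sigma$). Then every reachable state reaches $D$ using only progressive events, so \pnb{P} holds. However, $C \in \mode$ and no marked state is reachable from $C$ within \mode, so $\neg\mnb{\mode}{C}$. Since $C$ is initial, hence reachable, $\neg\mnb{\mode}{P}$ follows.

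The only step needing care is the verification of the definitions in each example: reachability of all states, and the vacuous truth of $\mnb{\mode}{q}$ for $q \notin \mode$. None of this is a genuine obstacle. The main design choice is ensuring the examples are legitimate FAs under the paper's conventions, which they are since they are already used in Fig.~\ref{fig:examples}.
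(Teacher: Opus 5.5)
Your proposal is correct and follows essentially the same route as the paper: reuse the two FAs of Fig.~\ref{fig:examples} with $\Sigma_{p} = \Sigma$, so that progressive nonblocking coincides with standard nonblocking. Your extra observation strengthens the first item harmlessly: progressive nonblocking always implies standard nonblocking, because a path using only progressive events is in particular a path of $P$, so that example fails progressive nonblocking for every choice of $\Sigma_{p}$.
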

\begin{proof}
Standard nonblocking is a special case of progressive nonblocking where $\Sigma_{p} = \Sigma$. Therefore, the examples from Fig.~\ref{fig:examples} show 1) and 2) for $\Sigma_{p} = \Sigma$.
\end{proof}

While there is no general implication between progressive nonblocking and modal nonblocking, there is an implication under a specific condition.
\begin{proposition}
\label{prp:progressive-modal}
For an FA $P$, mode $\mode \subseteq Q$ and $\Sigma_{p} \subseteq \Sigma$ such that $\forall ( q, \, \sigma, \, q' ) \in \trans{} \, (q \in \mode \Rightarrow ( q' \in \mode \Leftrightarrow \sigma \in \Sigma_{p} ) )$, it holds that $\mnb{\mode}{P} \Rightarrow \pnb{P}$.
\end{proposition}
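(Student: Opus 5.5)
The plan is to unfold Definitions~\ref{def:modal-nonblocking-state}, \ref{def:modal-nonblocking-FA} and~\ref{def:progressive-nonblocking-automaton} and to show that, under the stated condition on $\trans{}$, every path that stays inside $\mode$ is a path of progressive transitions. Assume \mnb{\mode}{P}. To show \pnb{P}, I fix a state $q \in Q$ with \re{q} and aim to exhibit some $q_{m} \in Q_{m}$ with $q \transs{}_{p} q_{m}$.

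\textbf{Key step (states in \mode).} Suppose $q \in \mode$. By \mnb{\mode}{P} and Definition~\ref{def:modal-nonblocking-state} there is a $q_{m} \in Q_{m}$ with $q \transs{}_{\mode} q_{m}$. I write this as a finite sequence $q = q_{0}' \trans{\sigma_{1}} q_{1}' \trans{\sigma_{2}} \cdots \trans{\sigma_{n}} q_{n}' = q_{m}$, where every transition lies in $\trans{}_{\mode}$, so that $q_{i-1}', q_{i}' \in \mode$.

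I then argue by induction on $n$, in the same style as Lemma~\ref{lem:transs-transsM}. For each transition $(q_{i-1}', \sigma_{i}, q_{i}')$ the source lies in $\mode$, so the hypothesis gives $q_{i}' \in \mode \Leftrightarrow \sigma_{i} \in \Sigma_{p}$. Because $q_{i}' \in \mode$, this yields $\sigma_{i} \in \Sigma_{p}$. Hence each step lies in $\trans{}_{p} = \trans{} \cap Q \times \Sigma_{p} \times Q$, and the whole sequence witnesses $q \transs{}_{p} q_{m}$. Only the direction ``$q' \in \mode \Rightarrow \sigma \in \Sigma_{p}$'' of the biconditional is needed here.

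\textbf{Remaining case and main obstacle.} The case $q \notin \mode$ is where I expect the difficulty. Modal nonblocking holds vacuously for such $q$, so it provides no path at all. The hypothesis on $\trans{}$ only constrains transitions leaving $\mode$, so it says nothing about how $q$ could progress.

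My first attempt will be to use the other direction of the biconditional, ``$\sigma \in \Sigma_{p} \Rightarrow q' \in \mode$''. It shows that progressive events never lead out of $\mode$, so $\trans{}_{p}$ restricted to sources in $\mode$ coincides with $\trans{}_{\mode}$. I will try to use this to relate the progressive behaviour of $P$ to its behaviour within $\mode$.

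If that does not settle reachable states outside $\mode$, I will check whether the intended reading of the proposition treats the modes as covering the reachable part of $P$, i.e.\ every reachable state lies in $\mode$. Under that reading the key step above already covers all reachable states and completes the proof. I will then state explicitly at that point which reading is used, since a reachable $q \notin \mode$ with no outgoing progressive path would otherwise need separate treatment.
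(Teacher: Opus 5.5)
You prove the case of reachable $q \in M$ correctly: every step of a path inside $M$ has its source and target in $M$, so the hypothesis puts its event in $\Sigma_{p}$. The case you flag is not merely hard; it cannot be closed, because the proposition is false as stated. The hypothesis only constrains transitions whose source lies in $M$, and modal nonblocking says nothing about states outside $M$. Concretely, take the FA of Fig.~\ref{fig:example-1} with $M = Q_{m} = \{A\}$ and $\Sigma_{p} = \emptyset$:
\begin{itemize}
\item Its only transition $(A,\sigma,B)$ leaves $M$ with $\sigma \notin \Sigma_{p}$, so the hypothesis holds.
\item $\mathbf{MN}_{M}(P)$ holds, since $A$ is marked and $B \notin M$.
\item $B$ is reachable, unmarked and has no outgoing transitions, so $\mathbf{PN}_{\Sigma_{p}}(P)$ fails.
\end{itemize}
This is exactly the FA the paper gives against the \emph{opposite} implication, but it actually refutes the stated one. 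The paper's proof hides the gap by claiming that ``all other transitions are in $\Sigma_{p}$''. The hypothesis does not give this for sources outside $M$, and it would not rescue a reachable dead end outside $M$ anyway.

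Consequently, your planned first attempt via $\sigma \in \Sigma_{p} \Rightarrow q' \in M$ cannot succeed, since it also only concerns sources in $M$. Your fallback is not a reading of the statement but an additional hypothesis. Rather than adopting it silently, present the counterexample and then state what your induction actually proves. One option: $\mathbf{MN}_{M}(P) \Rightarrow \mathbf{PN}_{\Sigma_{p}}(P)$ when every reachable state lies in $M$. The other, without that assumption: $\mathbf{MN}_{M}(P)$ gives every reachable $q \in M$ a progressive path to a marked state. This is the progressive analogue of the paper's $\mathbf{MN}_{Q'}(P) \Rightarrow \mathbf{GN}_{Q'}(P)$. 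As you observed, both versions use only the half $q' \in M \Rightarrow \sigma \in \Sigma_{p}$ of the hypothesis.
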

\begin{proof}
The construction of $\Sigma_{p}$ is such that for any transition $( q, \, \sigma, \, q' ) \in \trans{}$ where $q \in \mode$, it always holds that $q' \in \mode \Leftrightarrow \sigma \in \Sigma_{p}$: 
\begin{itemize}
    \item when $q'$ also in \mode, then $\sigma \in \Sigma_{p}$;
    \item when $q'$ not in \mode, then $\sigma \centernot\in \Sigma_{p}$.
\end{itemize}
Thus, all transitions leaving \mode are not in $\Sigma_{p}$, and all other transitions are in $\Sigma_{p}$. Then, $\mnb{\mode}{P} \Rightarrow \pnb{P}$.\\
The opposite implication does not hold, which is shown by the FA in Fig.~\ref{fig:example-1} when $\Sigma_{p} = \emptyset$. The FA is modal nonblocking and the transitions leaving \mode are not in $\Sigma_{p}$, yet from state $B$ no marked state can be reached, and therefore the FA is not progressive nonblocking.
\end{proof}

Using Proposition~\ref{prp:progressive-modal} on Example~1,
it can be shown that progressive nonblocking with $\Sigma_{p} = \{\sigma_{1}, \, \sigma_{2}, \, \sigma_{\mathrm{fix}} \}$ is equal to modal nonblocking with $\mode = \{ A, \, B, \, E \}$. Indeed, by leaving the event $\sigma_{\mathrm{error}}$ out of the set of progressive events, progressive nonblocking effectively requires that \mode may not be left to reach a marked state.

However, progressive nonblocking cannot be applied to Examples~2 and~3 because they feature multiple state sets of interest. This would require multiple sets of progressive events to be defined, which is not considered in~\cite{Ware2014ProgressiveVerification}.


\subsection{Multitasking nonblocking}

For multitasking nonblocking, sets of colored marked states are defined. A state is considered multitasking nonblocking iff, for each color, there exists a path towards a marked state of that color.
\begin{definition}[Multitasking nonblocking FA~\cite{DeQueiroz2005MultitaskingSystems}]\label{def:multitasking-nonblocking-FA}
Let $P = $ \automaton be an FA, and let $\mathcal{C} \subseteq 2^{Q}$ contain sets of colored marked states $C$. $P$ is \emph{multitasking nonblocking} for $\mathcal{C}$, denoted $\mtnb{P}$, iff $\forall q \in Q \, (\re{q} \Rightarrow \forall C \in \mathcal{C} \, ( \exists q' \in C \, (q \transs{} q') ) )$.
\end{definition}

Neither multitasking nonblocking nor modal nonblocking implies the other.
\begin{proposition}~   
\begin{enumerate}
    \item There exists an FA $P$, an $\mode \subseteq Q$ and a $\mathcal{C} \subseteq 2^{Q}$ such that $\mnb{\mode}{P} \centernot\Rightarrow \mtnb{P}$.
    \item There exists an FA $P$, an $\mode \subseteq Q$ and a $\mathcal{C} \subseteq 2^{Q}$ such that $\mtnb{P} \centernot\Rightarrow \mnb{\mode}{P}$.
\end{enumerate}
\end{proposition}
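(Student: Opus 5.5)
The plan is to prove both items by exhibiting small counterexample automata, as was done for the earlier propositions. For generalized and progressive nonblocking, the earlier proofs reduced to standard nonblocking via a special parameter choice. I would try the same reduction here: multitasking nonblocking with $\mathcal{C} = \{Q_{m}\}$ is exactly standard nonblocking (Definition~\ref{def:nonblocking-FA}), since it then asks that every reachable state can reach some state of $Q_{m}$.

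\textbf{Item 1.} I would take the FA of Fig.~\ref{fig:example-1} with $\mode = \{A\}$, $Q_{m} = \{A\}$ and $\mathcal{C} = \{\{A\}\}$. Both states are reachable.
\begin{itemize}
\item State $A$ is marked, so it is modal nonblocking.
\item State $B$ lies outside $\mode$, so it is modal nonblocking by default.
\end{itemize}
Hence \mnb{\mode}{P} holds. However, $B$ is reachable and has no outgoing transition, so it cannot reach $A \in C$. Therefore \mtnb{P} fails.

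\textbf{Item 2.} I would take the FA of Fig.~\ref{fig:example-2} with $\mode = \{C\}$, $Q_{m} = \{D\}$ and $\mathcal{C} = \{\{D\}\}$. Both $C$ and $D$ reach $D$, so \mtnb{P} holds. But $C \in \mode$ is reachable, and it cannot reach a marked state via $\transs{}_{\mode}$, because the only marked state $D$ lies outside $\mode$. Hence \mnb{\mode}{P} fails.

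The only subtle point is that the colored sets in Definition~\ref{def:multitasking-nonblocking-FA} need not be tied to $Q_{m}$. Choosing $\mathcal{C} = \{Q_{m}\}$ removes this issue and lets the examples of Fig.~\ref{fig:examples} be reused verbatim. So I would phrase the proof as: ``multitasking nonblocking with $\mathcal{C} = \{Q_{m}\}$ coincides with standard nonblocking, so Fig.~\ref{fig:examples} shows 1) and 2).'' I expect no real obstacle beyond checking this coincidence against the definitions.
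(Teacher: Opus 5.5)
Your proposal is correct and takes the same approach as the paper: the paper also notes that multitasking nonblocking with $\mathcal{C} = \{Q_{m}\}$ coincides with standard nonblocking, and it reuses the two automata of Fig.~\ref{fig:examples}. Your explicit checks of each example simply spell out what the paper's one-line proof leaves implicit.
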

\begin{proof}
Standard nonblocking is a special case of multitasking nonblocking, where $\mathcal{C}= \{Q_{m}\}$. Therefore, the examples from Fig.~\ref{fig:examples} show 1) and 2) for $\mathcal{C}= \{Q_{m}\}$.
\end{proof}

There are, in fact, no non-trivial conditions where there is an implication between multitasking nonblocking and modal, or multimodal, nonblocking. In other words, the variants are complimentary.
Multitasking nonblocking guarantees that multiple colored marked states in different modes remain reachable, while multimodal nonblocking guarantees for each mode that a marked state can be reached while staying within that mode.

Their complimentary nature can be shown in Examples~2 and~3. In Example~2, one may choose $\mathcal{C} = \{ \{A_{2}\}, \, \{B_{1}\} \}$ to guarantee that a colored marked state in both modes can always be reached. In addition, multimodal nonblocking guarantees that each mode itself is nonblocking without leaving that mode.
In Example~3, one may choose $\mathcal{C} = \{ \{GR\}, \, \{RG\} \}$, to guarantee that the green state of either traffic light is always reachable. In addition, multimodal nonblocking guarantees that both traffic lights are able to turn green when considered separately, i.e., one light can turn green without depending on the other turning green.


\subsection{Quantitative nonblocking}

For quantitative nonblocking, a positive integer $k$ is defined. A state is quantitative nonblocking if there exists a sequence of transitions towards a marked state consisting of $k$ or less transitions.
\begin{definition}[Quantitative nonblocking FA~\cite{Zhang2024QuantitativelySystem}]\label{def:quantitative-nonblocking-FA}
Let $P = $ \automaton, and let $q \transs{}^{k} q'$ denote that $q \transs{} q'$ in at most $k \in \mathbb{N}_{0}$ transitions. $P$ is \emph{quantitative nonblocking} for $k$, denoted \qnb{P}, iff $\forall q \in Q \, (\re{q} \Rightarrow \exists q_{m} \in Q_{m} \, ( q \transs{}^{k} q_{m} ) )$.
\end{definition}

Neither quantitative nonblocking nor multitasking nonblocking implies the other.
\begin{proposition}~   
\begin{enumerate}
    \item There exists an FA $P$, an $\mode \subseteq Q$ and a $k \in \mathbb{N}_{0}$ such that $\mnb{\mode}{P} \centernot\Rightarrow \qnb{P}$.
    \item There exists an FA $P$, an $\mode \subseteq Q$ and a $k \in \mathbb{N}_{0}$ such that $\qnb{P} \centernot\Rightarrow \mnb{\mode}{P}$.
\end{enumerate}
\end{proposition}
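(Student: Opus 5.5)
Both claims are existence statements, so I will prove each with a small explicit counterexample, as in the earlier propositions. The natural first attempt would reuse Fig.~\ref{fig:examples}, but quantitative nonblocking with large $k$ does not collapse to standard nonblocking unless $k \geq |Q|$. I will instead check both small FAs directly, or build tailored ones where needed.

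For 1), I take the FA in Fig.~\ref{fig:example-1} with $\mode = \{A\}$, $Q_{m} = \{A\}$, and any $k \in \mathbb{N}_{0}$, say $k = 0$. State $A$ is marked, so \mnb{\mode}{A} holds by taking no transitions. State $B \centernot\in \mode$ is modal nonblocking by default. Hence \mnb{\mode}{P} holds. However, $B$ is reachable and has no outgoing transitions, so no marked state is reachable from $B$ in any number of steps, let alone at most $k$. Therefore \qnb{P} fails, for every $k$.

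For 2), I use the FA in Fig.~\ref{fig:example-2} with $\mode = \{C\}$, $Q_{m} = \{D\}$ and $k = 1$. From $C$, the marked state $D$ is reached in one transition, and $D$ is marked, so \qnb{P} holds. But $C \in \mode$, and the only path from $C$ to a marked state leaves \mode. Thus $\neg\mnb{\mode}{C}$, and since $C$ is initial and hence reachable, \mnb{\mode}{P} fails.

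The only point needing care is the quantifier over $k$. The statement merely asks for some $k$, so fixing $k=0$ in 1) and $k=1$ in 2) suffices. I do not expect any real obstacle; the main thing is to verify the definition of $q \transs{}^{k} q'$ on these two-state automata. It is also worth remarking that the proposition's lead-in sentence says ``multitasking'', whereas the statement concerns modal nonblocking.
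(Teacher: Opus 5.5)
Your proof is correct and uses the same two automata from Fig.~\ref{fig:examples} as the paper, with the quantitative condition checked directly for $k=0$ and $k=1$. The paper instead takes $k=|Q|$ and notes that quantitative nonblocking then coincides with standard nonblocking (this already holds for $k \geq |Q|-1$, since any path to a marked state can be shortened to a simple one), so the earlier proposition on standard nonblocking applies without change.
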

\begin{proof}
Standard nonblocking is a special case of quantitative nonblocking where $k = |Q|$.
Therefore, the examples from Fig.~\ref{fig:examples} show 1) and 2) for $k = |Q|$.
\end{proof}

While there is no general implication between quantitative nonblocking and modal nonblocking, there is an implication for multimodal nonblocking under a specific condition.

\begin{proposition}
For an FA $P$, integer $k$ and sets of desired states $\mmode \subseteq 2^{Q}$, constructed such that each \mode contains states reachable in $k$ or less transitions from a state $q$: $\mode(q) = \{ q' \in Q | q \transs{}^{k} q' \}$ and $\mmode = \{ \mode(q) | q \in Q \}$, it holds that $\mmnb{P} \Rightarrow \qnb{P}$.
\end{proposition}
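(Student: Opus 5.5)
Fix a reachable state $q$ of $P$. To establish \qnb{P} we must exhibit a marked state $q_{m}$ with $q \transs{}^{k} q_{m}$. The plan is to apply multimodal nonblocking to the mode generated by $q$ itself, namely $\mode(q) = \{ q' \in Q \mid q \transs{}^{k} q' \}$. This set lies in \mmode by construction, and it contains $q$ because $q \transs{}^{k} q$ holds with zero transitions (valid since $k \in \mathbb{N}_{0}$).

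First, since \mmnb{P} and \re{q}, Definition~\ref{def:multimodal-nonblocking-FA} gives \mmnb{q}. Definition~\ref{def:multimodal-nonblocking-state} then yields \mnb{\mode(q)}{q}. Because $q \in \mode(q)$, Definition~\ref{def:modal-nonblocking-state} provides some $q_{m} \in Q_{m}$ with $q \transs{}_{\mode(q)} q_{m}$. Every state on this path, and in particular $q_{m}$, lies in $\mode(q)$. Hence $q_{m} \in \mode(q)$, and by the definition of $\mode(q)$ this means $q \transs{}^{k} q_{m}$. This is exactly what Definition~\ref{def:quantitative-nonblocking-FA} requires, so \qnb{P} follows.

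The point needing care is that the path found inside $\mode(q)$ may be long: it can wander within $\mode(q)$ and exceed $k$ steps. This does not matter, because only its endpoint is used, and membership of the endpoint in $\mode(q)$ already certifies a path of at most $k$ transitions in $P$. I would note this explicitly, so the argument does not appear to claim that the modal path itself is short. Lemma~\ref{lem:transs-transsM} is not needed here.

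As with the earlier propositions, I would add that the converse fails. A candidate counterexample is the FA of Fig.~\ref{fig:example-2} with $k=1$. There, $\mode(C) = \{C, D\}$ and $\mode(D) = \{D\}$, and each reaches $D$ within its mode. That makes this example not a counterexample, so a different one would have to be found, or the remark on the converse would be omitted.
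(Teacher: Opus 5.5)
Your proof is correct and follows the same argument as the paper's: instantiate \mmnb{q} at a mode containing $q$, then read off $q \transs{}^{k} q_{m}$ from the construction of that mode. You are more careful than the paper in insisting on the mode $\mode(q)$ itself, since $q_{m} \in \mode(p)$ for some $p \neq q$ would only give $p \transs{}^{k} q_{m}$. The converse is not part of the statement, but it does fail: take $k = 1$, initial state $p$, transitions $p \trans{\sigma_{1}} m_{1}$, $p \trans{\sigma_{2}} q$, $q \trans{\sigma_{1}} m_{2}$ and $Q_{m} = \{m_{1}, m_{2}\}$; then \qnb{P} holds, yet the reachable state $q \in \mode(p) = \{p, m_{1}, q\}$ cannot reach a marked state without leaving $\mode(p)$, so \mmnb{P} fails.
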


\begin{proof}
Replace both sides with Definitions~\ref{def:multimodal-nonblocking-FA} and~\ref{def:quantitative-nonblocking-FA}: $\forall q \in Q \, (\re{q} \Rightarrow \mmnb{q} ) \Rightarrow \forall q \in Q \, (\re{q} \Rightarrow \exists q_{m} \in Q_{m} \, ( q \transs{}^{k} q_{m} ) )$. Let $q \in Q$ and assume \re{q} to reduce the proof to $\mmnb{q} \Rightarrow \exists q_{m} \in Q_{m} \, ( q \transs{}^{k} q_{m} ) $.

Assume \mmnb{q}, or by Definitions~\ref{def:modal-nonblocking-state} and~\ref{def:multimodal-nonblocking-state}: $\forall \mode \in \mmode (q \in \mode \Rightarrow \exists q_{m} \in Q_{m} \, (q \transs{}_{\mode} q_{m}) )$.
Thus, for every $\mode \in \mmode$ and $q \in \mode$, it holds that $\exists q_{m} \in Q_{m} \, (q \transs{}_{\mode} q_{m})$. By the construction of \mode, then also $\exists q_{m} \in Q_{m} \, ( q \transs{}^{k} q_{m} )$.\\
The opposite implication does not hold, because \qnb{P} does not imply that all states $q \in \mode$ are modal nonblocking for every $\mode \in \mmode$.
\end{proof}


\section{Conclusion} \label{sec:conclusion}

This paper has introduced modal and multimodal nonblocking to capture a more expressive and versatile nonblocking property. The novel variants can be applied to systems with fault behavior, reconfiguration and multiple control goals. Synthesis algorithms are presented to obtain modal or multimodal nonblocking supervisors. The novel variants are compared to established nonblocking variants.

With modal and multimodal nonblocking, supervisory control synthesis could be sensibly applied to safety-critical water infrastructures with multiple control goals, like in \cite{Minkenberg2026AbstractionInsights}. This does require some implementation of the novel nonblocking variants in a tool like ESCET~\cite{Fokkink2023EclipseToolkit}. 

Additionally, for practical applications of the novel nonblocking variants, it is necessary that the supervisor is also \emph{safe}: the supervisor never violates a given requirements model.
The safety property is left out in this paper, but could be introduced by adding a \emph{plantified} requirement model, see~\cite{Flordal2007CompositionalEquivalence}. Synthesis of the synchronous composition~\cite{Cassandras2008} of the plant model and the plantified requirement model produces a safe supervisor by default. The hypothesis is that this still holds in the context of modal nonblocking, as long as the sink state belongs to at least one \mode.

Furthermore, since EFAs can be transformed to FAs, as shown in~\cite{Skoldstam2008SupervisoryVariables-Revised}, the modal and multimodal nonblocking property can be defined for EFAs as well. The expansion to EFAs is future work. 

Finally, there is potential for combining different, complementary nonblocking variants in a real case study. Specifically, progressive nonblocking, multitasking nonblocking and multimodal nonblocking could be combined to achieve a very versatile way to obtain very specific nonblocking properties.


\bibliographystyle{IEEEtran.bst}
\bibliography{refs.bib}


\end{document}